\renewcommand{\le}{ \leqslant }
\renewcommand{\ge}{ \geqslant }
\newcommand{\lam}{\lambda}
\newcommand{\eps}{\epsilon}
\newcommand{\half}{\displaystyle \frac{1}{2}}
\newcommand{\real} {\mathbb{R}}
\newcommand{\Kset} {\mathcal{K}}		%
\newcommand{\opt}{^{\mathsf{*}}}
\newcommand{\bnorm }{\theta}			%
\newcommand{\ds} {\displaystyle}
\newcommand{\figurescale}		{0.750}				
\providecommand{\norm}[1]{\lVert#1\rVert}
\providecommand{\abs}[1]{\left\vert #1 \right\vert}
\providecommand{\tabref}[1]{Table~\ref{#1}}
\providecommand{\figref}[1]{Fig.~\ref{#1}}
\providecommand{\eqref}[1]{\eqref{#1}}
\title{Sparsity-based~Algorithm~for~Detecting~Faults~in~Rotating~Machines%
\footnote{Preprint submitted to \textit{Mechanical Systems and Signal Processing (Elsevier)}.} %
}
\author[1,2]{Wangpeng He\thanks{Email: wangp.he@gmail.com}}
\author[1]{Yin Ding\thanks{Email: yd372@nyu.edu}}
\author[2]{Yanyang Zi}
\author[1]{Ivan W. Selesnick}
\affil[1]{Tandon School of Engineering, New York University, 6 Metrotech Center, Brooklyn, NY 11201, USA}
\affil[2]{State Key Laboratory for Manufacturing and Systems Engineering, School of Mechanical Engineering, Xi'an Jiaotong University, Xi'an 710049, P. R. China}
\begin{document}

\maketitle

\begin{abstract}
This paper addresses the detection of periodic transients in vibration signals for detecting faults in rotating machines.
For this purpose, we present a method to estimate periodic-group-sparse signals in noise.
The method is based on the formulation of a convex optimization problem.
A fast iterative algorithm is given for its solution.
A simulated signal is formulated to verify the performance of the proposed approach for periodic feature extraction.
The detection performance of comparative methods is compared with that of the proposed approach via RMSE values and receiver operating characteristic (ROC) curves.
Finally, the proposed approach is applied to compound faults diagnosis of motor bearings.
The non-stationary vibration data were acquired from a SpectraQuest's machinery fault simulator.
The processed results show the proposed approach can effectively detect and extract the useful features of bearing outer race and inner race defect.
\end{abstract}

\section{Introduction}

Rotating machinery is one of the most common types of mechanical equipment and plays a significant role in industrial applications.
Early detection of faults developing in rotating machinery is of great importance to prevent economic loss and personal casualties \cite{heng2009rotating}.
Rolling element bearings and gearboxes are two kinds of widely used components in rotating machines and their failures are among the most frequent reasons for machine breakdown.

Much attention has focused on vibration-based diagnosis of mechanical faults in rotating machines \cite{yan2014wavelets}.
The detection of periodically occurring transient vibration signatures is of vital importance for vibration-based condition monitoring and fault detection of rotating machinery \cite{randall2011rolling}.
However, these useful transient features are usually buried in heavy background noise and other irrelevant vibrations.
To address this problem, many signal processing methods have been introduced, such as singular value decomposition (SVD) \cite{cong2013short}, 
empirical mode decomposition (EMD) \cite{lei2013review},
and methods based on different wavelet transforms,
e.g., dual-tree wavelet in \cite{zhang2014periodic}, harmonic wavelet in \cite{yan2010harmonic}, and tunable Q-factor wavelet (TQWT) in \cite{he2014automatic}.
These methods have achieved successful applications in the field of machinery fault diagnosis.

Recently, an algorithm, called `overlapping group shrinkage' (OGS) was developed for estimating group-sparse signals in noise \cite{Chen_Selesnick_2014_OGS}.
The OGS algorithm was initially formulated as a convex optimization promoting group sparsity by a convex regularization.
In order to promote sparsity more strongly, an improved version of OGS was developed, which utilizes a non-convex regularization \cite{Chen_Selesnick_2014_GSSD}.
The superiority of denoising group-sparse signals using the approach presented in \cite{Chen_Selesnick_2014_GSSD} indicates its potential for effectively extracting periodic transient pulses.

This paper aims to develop an approach for rotating machinery fault diagnosis based on a periodic group-sparse signal representation. The signature of localized faults of the gear teeth and bearing components generally exhibit periodic transient pulses when a rotating machine is operated at a constant speed \cite{he2013time}. Meanwhile, the large amplitudes of these useful features are not only sparse but also tend to form groups. Several neighborhood-based denoising methods have been developed for machinery fault diagnosis utilizing this property \cite{chen2012fault, zhen2008customized, he2013tunable}.
Our approach is based on a signal model intended to capture the useful impulsive features for machinery fault diagnosis.
In particular, this paper addresses the problem of estimating $x$ from a noisy observation $y$.
We model the measured discrete-time series, $y$, as
\begin{equation}\label{eq:signal}
	y_n = x_n + w_n, \qquad n = 0, \ldots, N-1,
\end{equation}
where the signal $x$ is known to have a periodic group-sparse property and $w$ is white Gaussian noise.
A group-sparse signal is one where large magnitude signal values tend not to be isolated.
Instead, these large magnitude values tend to form groups.

Convex optimization is commonly used to estimate sparse vectors from noisy signals,
where we solve the optimization problem with the prototype
\begin{equation} \label{eq:optimization}
	x\opt =\arg \min_{x} \Big \{ F(x)= \half \norm{ y - x }_2^2 + \lam \Phi(x) \Big \},
\end{equation}
where
 $\lam >$ 0 is a regularization parameter and $\Phi$ : $\mathbb{R}^N$ $\to$ $\mathbb{R}$ is a sparsity-promoting penalty function (regularizer).
Conventionally, the regularizer $\Phi(x)$ is a convex function, e.g. $\ell_1$-norm.
In \cite{selesnick2014sparse}, an idea of using non-convex regularizer and keeping the convexity of entire problem is used for signal denoising problem, where the sparsity can be significantly promoted comparing to $\ell_1$-norm,
and the problem still has a unique solution due to the convexity.

In this paper, we adopt ideas from \cite{selesnick2014sparse} and \cite{Chen_Selesnick_2014_OGS},
and present a method for estimating periodic-group-sparse signals in noise.
We propose its use for detecting faults in rotating machinery,
where the fault characteristic frequency (period of the group-sparse pulses) is used as prior knowledge.
Similar to the approach in \cite{Chen_Selesnick_2014_GSSD},
the non-convex regularization term in the proposed method is properly chosen so as to ensure that the total objective function $F$ is convex;
however, in contrast to \cite{Chen_Selesnick_2014_GSSD}, where each group has to be contiguous,
we allow grouping with intervals, and furthermore periodically.

As a consequence, in this work, the regularization term $\Phi$ in \eqref{eq:optimization} is formulated specifically to utilize the periodicity of the impulsive fault features.
The aim of our approach is to capture the useful impulsive features for the purpose of machinery fault diagnosis.
Additionally, it has the potential to separate compound fault features by utilizing different periods of the periodic transient pulses corresponding to different fault frequencies (e.g., various defect frequencies of rolling element bearings).
The proposed approach also reduces to a non-periodic group-sparse signal denoising method,
i.e., we can utilize the sparsity-based OGS approach \cite{Chen_Selesnick_2014_GSSD}, if we do not have prior knowledge of the period of the group-sparse transients.
Thus, the proposed sparsity-based approach is a generalization of the non-convex regularized OGS \cite{Chen_Selesnick_2014_GSSD}.

The paper is organized as follows. A brief review of OGS with convex and non-convex regularization is given in Section 2. Section 3 presents a method for denoising periodic group sparse signals. In Section 4 a simulation study is performed to validate the effectiveness of the proposed method. Section 5 applies the proposed periodic group sparse denoising method to fault diagnosis of motor bearings for further validation of its effectiveness. Finally, conclusions are summarized in Section 6.

\section{Review}\label{sec:review_OGS}

In this section, we give short reviews of overlapping group shrinkage (OGS) \cite{Chen_Selesnick_2014_OGS}
and majorization-minimization (MM) \cite{opt_Lange_book_2004}.

\subsection{Overlapping Group Shrinkage (OGS)}
There are several advantages to formulating sparse estimation as a convex optimization problem.
The most basic advantage is that the problem can then be very reliably and efficiently solved using convex optimization methods \cite{boyd2009convex}.
Although a non-convex regularizer can promote sparsity more strongly, it generally leads to a non-convex
optimization problem with non-optimal local minima \cite{selesnick2015convex}.
To avoid the formulation of a non-convex optimization problem, one may utilize a non-convex regularizer $\Phi$ designed so as to ensure the total objective function is convex.

The problem of denoising a group sparse signal was addressed in \cite{Chen_Selesnick_2014_OGS} which utilized convex optimization. 
An improved method was proposed in \cite{Chen_Selesnick_2014_GSSD}, 
which utilized non-convex regularization designed to ensure convexity of the objective function. 
The problem is solved efficiently by an iterative algorithm based on majorization-minimization (MM) \cite{opt_Lange_book_2004}. 
The objective function for the OGS problem, with a group size of $K$ and convex or non-convex regularization, is denoted as
\begin{align}\label{eqn:fd_cost_function_GSSD}
	x \opt = \arg \min_{x \in \real^N} \Bigg\{ P_0(x) 
	= \half \norm{y-x}_2^2 + \lam \sum_{n}  \phi \bigg( \Big[\sum_{k \in \Kset} x_{n+k}^2 \Big]^{1/2} ;a \bigg) \Bigg\}
\end{align}
where $\Kset : = \{ 0, 1, \ldots K-1 \} $,
and $y \in \real^{N}$ is the noisy observation. For $x \in \real^N$, we define $x_n=0$ for $n < 0$ and $n \ge N$.
We assume the penalty function $\phi : \real \to \real_{+}$ satisfies the properties:
\begin{enumerate}
	\item
		$ \phi $ is continuous on $ \real $.
	\item
		$ \phi $ is twice differentiable on $\real \setminus \{ 0 \}$.
	\item
		$ \phi $ is increasing and concave on $ \real_+ $.
	\item
		$ \phi $ is symmetric, $ \phi(-x; a) = \phi(x; a) $.
	\item
		$ \phi'(0^+; a) = 1 $.
	\item
		$ \phi''(x; a) \ge -a $ for all $ x \neq 0 $.
	\item
		$ \phi(x; 0) = |x| $.
\end{enumerate}
which is used to induce the resulting sparsity in an optimization problem.
\begin{figure}[t]
\centering
\includegraphics[scale = 0.8]{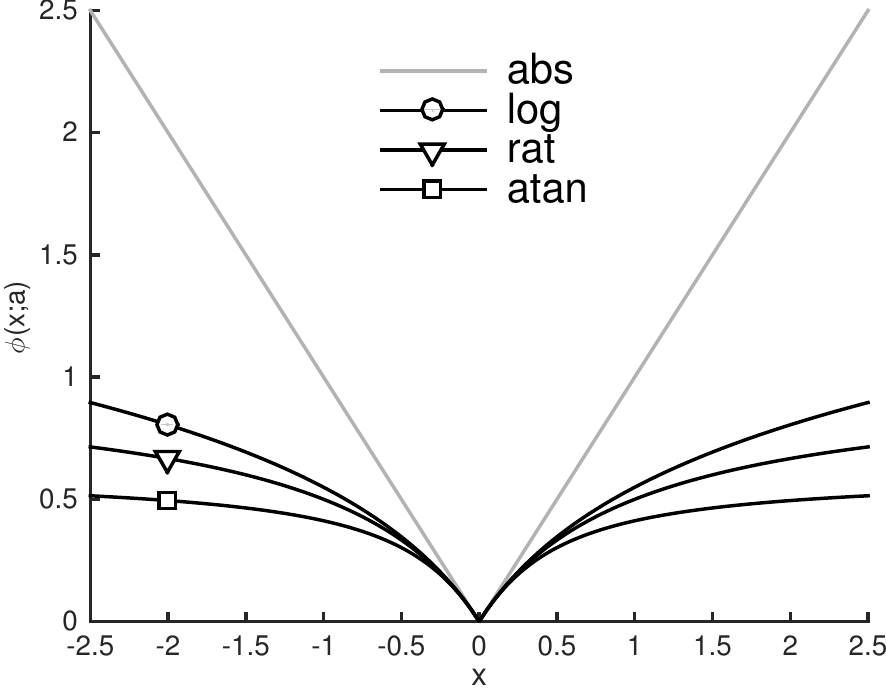}
\caption{Examples of penalty functions.}
\label{fig:example_pen}
\end{figure}

\begin{table}[t]
	\caption{Sparsity-promoting penalty functions $(a \ge 0)$.} 
	\label{tab:popu_penalty}
\begin{center}
	\scalebox{1.00}{
		\begin{tabular}{@{} lll @{}}
		\bottomrule
		\textbf{Penalty}	 & $\phi(x ; a) $   & $ \psi(x) $ \\[0.4em]
		\midrule
			abs	 	& $ \abs{x} $						
	 				& $ \abs{x} $ \\[0.8em]	
			log	 	& $ \ds \frac{1}{a} \log ( 1 + a \abs{x} )$
	 				& $ \ds \abs{x} ( 1 + a \abs{x} ) $	\\[0.8em]
			rat 	& $ \ds  \frac{\abs{x}}{1+a\abs{x}/2}$
					& $ \ds \abs{x} ( 1 + a \abs{x} )^2 $	\\[0.8em]
			atan 	&$\ds\frac{2}{a \sqrt{3}} \left( \tan^{-1} \left( \frac{1+2 a \abs{x}} {\sqrt{3}} \right) - \frac{\pi}{6}\right)$
	 				& $ \ds \abs{x} ( 1 + a \abs{x} + a^2 \abs{x}^2 ) $	\\[0.8em]
		\toprule
		\end{tabular}
	}
	\end{center}
\end{table}

Note that the objective function $P_0$ in \eqref{eqn:fd_cost_function_GSSD} may be convex even if the regularizer (penalty) is not.
When the penalty function $\phi$ satisfies the conditions listed above, 
then the parameter `$a$' can be chosen so that $P_0$ is convex even if $\phi$ is not \cite{selesnick2014sparse}.
This approach is used to improve OGS in \cite{Chen_Selesnick_2014_GSSD} 
where it has been proved that the objective function $P_0$ in \eqref{eqn:fd_cost_function_GSSD} is strictly convex if
\begin{align}
	0 \le a < \frac{1}{\lam K}.
\end{align}

Table~\ref{tab:popu_penalty} gives some examples of penalty functions satisfying the above-listed conditions, 
and these examples are illustrated in \figref{fig:example_pen}.
In order to induce the sparsity most strongly,
the arctangent `atan' function can be used among all the three given non-convex functions.
\figref{fig:example_pen} also illustrates that under the same `$a$' parameter,  `atan' function adheres the most `concavity'.
In addition to these non-convex penalty functions, 
\figref{fig:example_pen} also shows the $\ell_1$-norm as a special case.
Note that we allow $ a = 0 $, which is the extreme case of the three other penalties.
In particular, $ \phi(x;0) = \abs{x}$ is a convex function.
The original OGS algorithm given in \cite{Chen_Selesnick_2014_OGS} considers only the special case $a=0$, i.e., convex regularization.

\subsection{Majorization-minimization (MM)}
A detailed derivation to solve problem \eqref{eqn:fd_cost_function_GSSD},
based on the majorization-minimization (MM) method is given in \cite{Chen_Selesnick_2014_GSSD}.
The MM method simplifies a complicated optimization problem into a sequence of easy ones, and is described by the iteration
\begin{align}\label{eqn:fd_mm_iteration}
	u^{(i+1)} = \arg \min_{u} G( u , u^{(i)} ),
\end{align}
where $i$ denotes the number of iterations,
and $G : \real^{N} \times \real^{N} \to \real$ is a majorizer of the objective function $J$, satisfying
\begin{subequations}
\begin{align}
	&G(u,v) \ge J(u),\quad \text{ for all } u, v,\\
	&G(u,u) = J(u).
\end{align}
\end{subequations}

A majorizer of $\phi$ is given by
\begin{align} \label{eqn:fd_guv}
	g(u,v) = \frac{(u^2 - v^2)}{2 \psi(v)} + \phi(v;a), \quad  \text{ when } v \neq 0,
\end{align}
where $ \psi $ is the function given in Table~\ref{tab:popu_penalty}.
For $ v = 0 $, $g(u,0)$ is defined by
\begin{align}\label{eqn:fd_singularity}
	g(u,0) : = \begin{cases}
			+\infty,	& \text{ if }u \neq 0, \\
			 0,			& \text{ if }u = 0.
	\end{cases}
\end{align}

As a consequence, the function $g : \real \times \real \to \real_{+} $ satisfies
\begin{subequations}
\begin{align}
	g(u,v) & \ge \phi(u;a), \quad \text{ when } u \neq v, \\
	g(u,u) &  = \phi(u;a),
\end{align}
\end{subequations}
Note that $g(u,0)$ defined in \eqref{eqn:fd_singularity} equals infinity except $u=0$.
This forces its minimizer to lock to $u=0$ in the MM iteration described in \eqref{eqn:fd_mm_iteration}.
This issue in the OGS problem is discussed in \cite{Chen_Selesnick_2014_GSSD},
where it does not affect the convergence when the algorithm is implemented with a non-zero initialization.

\section{OGS with binary weights}

To facilitate the following derivation, we define a binary sequence $b = [b_0, b_1, \ldots b_{K-1}]$, with $b_k \in \{ 0, 1\}$,
and sets
\begin{subequations}\label{eqn:fd_Ksets}
\begin{align}
	&\Kset		:=  \{ 0, 1, \ldots , K-1 \}, \label{eqn:fd_K} \\
	&\Kset_0 	:= \{ k \in \Kset: b_k = 0 \}, \label{eqn:fd_K0}\\
	&\Kset_1 	:= \{ k \in \Kset: b_k = 1 \}. \label{eqn:fd_K1}
\end{align}
\end{subequations}
%
Since $b$ is a binary vector, we have $\Kset = \Kset_0 \cup \Kset_1$ and $ \Kset_0 \cap \Kset_1 = \emptyset$ is the empty set.
We denote the cardinality (size) of the sets $\Kset$, $\Kset_1$ and $\Kset_0$ as
$K$, $K_1$, and $K_0$, respectively, so that $K = K_0 + K_1$, and
\begin{align} \label{eqn:definK1}
	\sum_{ k \in \Kset } b_k = \sum_{ k \in \Kset_1 } b_k & = K_1.
\end{align}

The following proposition is straightforward.

\newtheorem{proposition}{Proposition}
\begin{proposition}\label{prop:prop_1}
Let $\phi : \real \to \real$ satisfy the conditions listed in Section~\ref{sec:review_OGS}. When $\gamma > 0$ and $\lam > 0$, the function $p : \real \to \real$,
\begin{align}\label{eqn:fd_pv}
	p(v) = \frac{\gamma}{2} v^{2} + \lam \phi( v \; a )
\end{align}
is strictly convex if
\begin{align} \label{eqn:fd_phi_condition}
	\phi''(v;a) > -\frac{\gamma}{\lam}.
\end{align}
\end{proposition}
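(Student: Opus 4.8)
The plan is to show strict convexity of $p(v) = \tfrac{\gamma}{2} v^2 + \lambda \phi(v; a)$ by examining its second derivative on $\real \setminus \{0\}$ and then handling the origin separately, since $\phi$ is only twice differentiable away from $0$. First I would compute, for $v \neq 0$, that
\begin{align*}
	p''(v) = \gamma + \lambda \phi''(v; a).
\end{align*}
By hypothesis \eqref{eqn:fd_phi_condition} we have $\phi''(v; a) > -\gamma/\lambda$, so $\lambda \phi''(v; a) > -\gamma$ (using $\lambda > 0$), and hence $p''(v) > 0$ for all $v \neq 0$. This shows $p$ is strictly convex on each of the open half-lines $(-\infty, 0)$ and $(0, \infty)$.

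Next I would upgrade this to strict convexity on all of $\real$. The issue is only the kink at $v = 0$, where $\phi$ need not be differentiable (e.g. $\phi(v;0) = |v|$). The clean way is to use the characterization of convexity via supporting behavior: $p$ is continuous on $\real$ (both $v^2$ and $\phi$ are, by condition 1), and $\phi$ is concave and increasing on $\real_+$ with $\phi'(0^+; a) = 1$ and symmetric, so $\phi$ has a well-defined one-sided derivative at $0$ from each side, with the left derivative $-1$ no larger than the right derivative $+1$; thus $\phi$ itself is convex in a neighborhood of $0$ in the weak sense that its difference quotients are nondecreasing there — more carefully, one checks $\phi$ restricted near $0$ has nondecreasing slopes. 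Combined with the strict inequality $p'' > 0$ off the origin, the difference quotients of $p$ are strictly increasing across $0$ as well. A convenient rigorous route: show the derivative $p'$ (which exists and is continuous on $\real \setminus \{0\}$, with finite one-sided limits $p'(0^-) = \lambda \phi'(0^-;a) = -\lambda$ and $p'(0^+) = \lambda$ satisfying $p'(0^-) < p'(0^+)$) is strictly increasing on $\real \setminus \{0\}$ and jumps up at $0$; a function that is continuous on $\real$ and whose derivative is strictly increasing wherever it exists, with an upward jump at the single exceptional point, is strictly convex.

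Alternatively — and this is probably the slickest presentation — I would decompose $p(v) = \big[\tfrac{\gamma}{2} v^2 + \lambda \phi(v;a) - \lambda|v|\big] + \lambda |v|$. Since $\lambda |v|$ is convex, it suffices to show the bracketed term $q(v) := \tfrac{\gamma}{2} v^2 + \lambda\big(\phi(v;a) - |v|\big)$ is strictly convex and $C^1$. By conditions 4 and 5, $\phi(v;a) - |v|$ is even and has derivative $\to 0$ as $v \to 0^+$, so $q$ is continuously differentiable on all of $\real$ (the kink is removed), and $q''(v) = \gamma + \lambda \phi''(v;a) > 0$ for $v \neq 0$ by \eqref{eqn:fd_phi_condition}; a $C^1$ function whose second derivative is positive except at an isolated point is strictly convex. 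Then $p = q + \lambda|\cdot|$ is a sum of a strictly convex and a convex function, hence strictly convex.

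The main obstacle is purely the non-smoothness of $\phi$ at the origin: the naive argument "$p'' > 0$ everywhere $\Rightarrow$ strictly convex" does not literally apply because $p''(0)$ does not exist. All of the substantive content is in carefully justifying that this single bad point does no harm, which the decomposition $p = q + \lambda|\cdot|$ handles cleanly by absorbing the kink into the manifestly convex $|v|$ term and leaving a genuinely $C^1$ (indeed piecewise $C^2$) remainder. The positivity of $p''$ off the origin itself is immediate from the hypothesis and $\lambda > 0$.
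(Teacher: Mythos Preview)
Your argument is correct and your first route---showing $p''(v)=\gamma+\lambda\phi''(v;a)>0$ for $v\neq 0$ and then handling the origin via the one-sided derivatives $p'(0^-)=-\lambda<\lambda=p'(0^+)$---is exactly the paper's approach, which dispatches the kink at $0$ by invoking a lemma from \cite{Chen_Selesnick_2014_GSSD} to the same effect. Your alternative decomposition $p=q+\lambda|\cdot|$, where $q$ is $C^1$ with $q''>0$ off the origin, is a nice self-contained variant that avoids the external citation; it buys you a cleaner handling of the non-smooth point at the cost of one extra line of algebra, but the substantive content is the same.
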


Detailed proof of this proposition can be found in Appendix~\ref{app:A}.

\subsection{Problem definition}
We define the objective function $P_1 : \real^{N} \to \real$ as
\begin{align}\label{eqn:fd_pogs}
	x \opt = \arg \min_{x} \Bigg\{ P_1(x) = \half \norm{y-x}_2^2 + \lam \sum_{n}  \phi \big( \bnorm(x,b,n) ;a \big) \Bigg\}
\end{align}
where the binary-weighted grouping function $\bnorm : \real^{N} \times \real^{K} \times \mathbb{Z}  \to \real $ is defined as
\begin{align}\label{eqn:fd_bnorm}
	\bnorm (x,b,n) : = \Big[  \sum_{k=0}^{K-1} b_k x_{n+k}^2 \Big]^{1/2},
\end{align}
which is the Euclidean norm of a binary weighted block.
For $x \in \real^N$, we define $B_n(x) \in \real^{K}$ as
\begin{align}\label{eqn:fd_Bn}
	B_n(x) := [x_n, x_{n+1}, \ldots x_{n+K-1}],
\end{align}
i.e., a $K$-point subvector of $x$, starting at index $n$. Hence $\bnorm (x,b,n)$ can be written as
\begin{align}
	\bnorm (x,b,n) = \norm{ b \odot B_n(x) }_2,
\end{align}
where $\odot$ denotes element-wise multiplication.

Note that in \eqref{eqn:fd_pogs} if $b_k = 1$ for all $k \in \Kset$, then $ \bnorm^2 (x,b,n) = \sum_k x_{n+k}^2$,
and problem \eqref{eqn:fd_pogs} reduces to \eqref{eqn:fd_cost_function_GSSD}.
Therefore \eqref{eqn:fd_cost_function_GSSD} is a special case of \eqref{eqn:fd_pogs}.
Moreover, if $K = K_1 = 1$, this problem further reduces to scalar (i.e., non-group) sparse denoising.
In the following discussion, we consider the group sparse case: $N \gg K \ge K_1 > 0$. The case $K_1 = 0$ is trivial.

In the following section, we exploit the convexity condition of \eqref{eqn:fd_pogs},
to constrain the non-convex penalty function $\phi$, so as to ensure that the objective function $P_1$ is convex.
Therefore, the problem formulation we ultimately propose is a convex optimization problem. Hence,
it will have no non-optimal local minima in which an iterative optimization algorithm can be trapped.

\subsection{Convexity conditions}

\begin{proposition}\label{prop:prop_2}
Let $b \in \{ 0, 1\}^K$ be a binary vector with $ \sum_k b_k = K_1$. The function $P : \real^{K} \to \real$ defined as
\begin{align}\label{eqn:fd_ogs_Pu}
	P(u) = \frac{1}{2 K_1} \sum_{k=0}^{K-1} b_k u_k^2 + \lam \phi\bigg( \Big[\sum_{k=0}^{K-1} b_k u_{k}^2 \Big]^{1/2} ;a \bigg),
\end{align}
is strictly convex if
\begin{align}\label{eqn:fd_ogs_prop_2}
	\phi''(x;a) > - \frac{1}{ K_1 \lam} \qquad  \text{ for all } x \neq 0.
\end{align}
\end{proposition}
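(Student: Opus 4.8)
The plan is to reduce the statement to a one-dimensional convexity question and then invoke Proposition~\ref{prop:prop_1}. First I would observe that $P(u)$ depends on $u$ only through the subvector $\tilde u := (u_k)_{k \in \Kset_1} \in \real^{K_1}$, since $b_k = 0$ for $k \in \Kset_0$; with $\sum_k b_k u_k^2 = \norm{\tilde u}_2^2$ the objective becomes
\begin{align*}
	P(u) = \tilde P(\tilde u) := \frac{1}{2K_1}\norm{\tilde u}_2^2 + \lam\,\phi\big(\norm{\tilde u}_2 ; a\big),
\end{align*}
so it suffices to prove $\tilde P$ is strictly convex on $\real^{K_1}$ (as written, $P$ is constant along each direction $e_k$ with $k \in \Kset_0$, so the claimed strict convexity is really with respect to the coordinates indexed by $\Kset_1$). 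Then, since a function is strictly convex precisely when its restriction to every line is, I would fix $\tilde u_0 \in \real^{K_1}$ and $d \ne 0$ and analyze $g(t) := \tilde P(\tilde u_0 + t d)$.

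Next I would set $\rho(t) := \norm{\tilde u_0 + t d}_2$, so $\rho(t)^2 = \alpha t^2 + \beta t + c$ is a quadratic with leading coefficient $\alpha = \norm{d}_2^2 > 0$ which, being nonnegative for all $t$, has $\beta^2 - 4\alpha c \le 0$. I would then split into two cases. If the line misses the origin, $\rho(t) > 0$ for all $t$, so $\rho$ and $g$ are $C^2$; from $(\rho^2)'' = 2\alpha$ one gets $\rho\rho'' = \alpha - (\rho')^2 \ge 0$ and $(\rho')^2 = \alpha - (\alpha c - \beta^2/4)/\rho^2 \le \alpha$, and differentiating $g$ twice gives
\begin{align*}
	g''(t) = \frac{\alpha}{K_1} + \lam\,\phi''(\rho;a)\,(\rho')^2 + \lam\,\phi'(\rho;a)\,\rho''.
\end{align*}
Here $\phi'(\rho;a) \ge 0$ (since $\phi$ is increasing on $\real_+$) and $\rho'' \ge 0$, so the last term is nonnegative; and since $0 \le (\rho')^2 \le \alpha$, the middle term is at least $\lam\alpha\min\{0,\phi''(\rho;a)\}$, whence $g''(t) \ge \alpha\big(\tfrac{1}{K_1} + \lam\min\{0,\phi''(\rho;a)\}\big) > 0$ by \eqref{eqn:fd_ogs_prop_2}. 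In the remaining case the line passes through the origin, so $\beta^2 = 4\alpha c$, $\rho(t) = \sqrt{\alpha}\,\abs{t - t_0}$ for some $t_0$, and after the affine change of variable $\tau = \sqrt{\alpha}(t - t_0)$, using $\phi(-\tau;a) = \phi(\tau;a)$, the function $g$ becomes $\tfrac{1}{2K_1}\tau^2 + \lam\phi(\tau;a)$ --- exactly the function $p$ of Proposition~\ref{prop:prop_1} with its coefficient $\gamma$ equal to $1/K_1$ --- and then \eqref{eqn:fd_ogs_prop_2} is precisely the condition \eqref{eqn:fd_phi_condition}, so Proposition~\ref{prop:prop_1} gives strict convexity of $g$. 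Since every line restriction of $\tilde P$ is strictly convex, $\tilde P$, and hence $P$, is strictly convex.

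The main obstacle I anticipate is the non-smoothness of $\phi$ and of $\norm{\cdot}_2$ at the origin: $\phi$ is only twice differentiable on $\real \setminus \{0\}$, so the clean second-derivative computation of $g''$ is legitimate only along lines that avoid the origin, and the radial structure of $\tilde P$ forces lines through the origin to be treated separately --- which is exactly where Proposition~\ref{prop:prop_1} carries the argument. A subtler point worth stating carefully is that \eqref{eqn:fd_ogs_prop_2} is only a pointwise strict inequality rather than a uniform lower bound on $\phi''$; this is still enough, because the argument produces $g''(t) > 0$ at each individual $t$, which already forces strict convexity of every one-dimensional restriction (the auxiliary inequalities $0 \le (\rho')^2 \le \alpha$, $\rho'' \ge 0$, $\phi' \ge 0$, and $\phi'(0^+;a) = 1 > 0$ all follow directly from the listed properties of $\phi$).
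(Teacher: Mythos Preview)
Your proof is correct and shares the paper's first move: drop the coordinates in $\Kset_0$ so that $P(u)=\tilde P(\tilde u)=\tfrac{1}{2K_1}\norm{\tilde u}_2^2+\lam\phi(\norm{\tilde u}_2;a)$. From there, however, the paper's argument is a single line: it simply sets $v=\big[\sum_{k\in\Kset_1}u_k^2\big]^{1/2}$ and invokes Proposition~\ref{prop:prop_1} with $\gamma=1/K_1$, treating strict convexity of the scalar function $p(v)$ as equivalent to strict convexity of $\tilde P$. Your route is more explicit: you restrict $\tilde P$ to an arbitrary line, handle lines away from the origin by a direct second-derivative bound (using $0\le(\rho')^2\le\alpha$, $\rho''\ge0$, $\phi'\ge0$), and reduce lines through the origin exactly to $p$ so that Proposition~\ref{prop:prop_1} applies verbatim. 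This buys you a genuinely self-contained proof that the radial composition inherits strict convexity, a step the paper leaves implicit; it also lets you flag, correctly, that $P$ is constant in the $\Kset_0$ directions, so ``strict'' convexity in the statement should be read on $\real^{K_1}$ rather than on all of $\real^{K}$.
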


This Proposition is proved in detail in Appendix~\ref{app:B}.
Utilizing the above results, we find a range of $a$ ensuring that,
even though $\phi$ is non-convex, the objective function $P_1$ in \eqref{eqn:fd_pogs} is strictly convex.

\newtheorem{theorem}{Theorem}
\begin{theorem}\label{the:the_1}
Assume that problem $P_1$ \eqref{eqn:fd_pogs} has $K_1 = \sum_k b_k$ non-zero weights in one binary weighted group,
then the objective function is strictly convex
if the parameter `$a$' of the penalty function $\phi(\cdot;a)$ satisfies
\begin{align}\label{eqn:fd_ogs_the_a}
	0 \le a < \frac{1}{ K_1 \lam }.
\end{align}
\end{theorem}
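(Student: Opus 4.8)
The plan is to reduce the convexity of $P_1$ over $\real^N$ to the per-group statement already established in Proposition~\ref{prop:prop_2}, and then to translate the hypothesis \eqref{eqn:fd_ogs_prop_2} of that proposition into the stated range \eqref{eqn:fd_ogs_the_a} for the parameter $a$ by invoking the structural properties of $\phi$ listed in Section~\ref{sec:review_OGS}. First I would write the objective as $P_1(x) = \frac12\norm{y-x}_2^2 + \lam\sum_n \phi(\bnorm(x,b,n);a)$ and split the quadratic data-fidelity term into a sum over the same index set $n$ that governs the regularizer, distributing each $x_m^2$ across the (finitely many) groups that contain index $m$. Because $b$ is a periodic/shifted binary pattern with exactly $K_1$ ones, each coordinate $x_m$ appears in exactly $K_1$ of the weighted groups $\bnorm(x,b,n)$; hence $\frac12\sum_m x_m^2 = \frac{1}{2K_1}\sum_n \sum_{k} b_k x_{n+k}^2$. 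This lets me regroup $P_1$ (up to the constant and linear-in-$x$ pieces coming from $-\langle y,x\rangle + \frac12\norm{y}_2^2$, which are affine and therefore do not affect convexity) as a sum over $n$ of terms of exactly the form $P(B_n(x))$ appearing in \eqref{eqn:fd_ogs_Pu}.

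Next I would apply Proposition~\ref{prop:prop_2}: each summand $P(B_n(x))$ is strictly convex in the block $B_n(x)$ provided $\phi''(x;a) > -\frac{1}{K_1\lam}$ for all $x\neq 0$. A finite sum of convex functions is convex, and the sum is strictly convex here because every coordinate of $x$ enters at least one strictly convex summand (indeed $K_1$ of them), so no direction of $\real^N$ is left in the "flat" part; together with the affine remainder this gives strict convexity of $P_1$ on $\real^N$. It remains to show that the condition \eqref{eqn:fd_ogs_the_a}, $0\le a < \frac{1}{K_1\lam}$, implies $\phi''(x;a) > -\frac{1}{K_1\lam}$ for all $x\neq 0$. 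This is where I would use property~6, $\phi''(x;a) \ge -a$ for all $x\neq 0$: if $a < \frac{1}{K_1\lam}$ then $\phi''(x;a) \ge -a > -\frac{1}{K_1\lam}$, as required; the boundary case $a=0$ gives $\phi(x;0)=|x|$ (property~7), for which the objective is a sum of a strictly convex quadratic and a convex term, hence strictly convex. One should also note $K_1 \ge 1$ and $\lam>0$ so the bound is well-defined and nonempty.

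I do not expect any genuine obstacle; the theorem is essentially a packaging of Proposition~\ref{prop:prop_2} plus property~6. The one step requiring care is the bookkeeping in the first paragraph: one must confirm that every coordinate index $m \in \{0,\ldots,N-1\}$ is covered by exactly $K_1$ weighted groups, which relies on the convention $x_n = 0$ for $n<0$ and $n\ge N$ (so that boundary groups are handled consistently) together with the fact that $b$ has exactly $K_1$ ones in each window. A careful indexing argument — or simply observing that $\sum_n \sum_k b_k x_{n+k}^2 = \sum_m x_m^2 \sum_k b_k = K_1 \sum_m x_m^2$ by interchanging the order of summation — makes this transparent and is the only place a reader might want the details spelled out.
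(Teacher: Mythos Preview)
Your proposal is correct and follows essentially the same approach as the paper's proof: redistribute the quadratic $\tfrac12\sum_m x_m^2 = \tfrac{1}{2K_1}\sum_n\sum_k b_k x_{n+k}^2$ via the interchange-of-summation identity, regroup $P_1$ as $\sum_n P(B_n(x))$ plus an affine remainder, invoke Proposition~\ref{prop:prop_2}, and then use property~6 ($\phi''\ge -a$) to convert the second-derivative condition into the bound $a<1/(K_1\lam)$. Your version is in fact slightly more careful than the paper's in spelling out why the sum is \emph{strictly} convex (every coordinate participates in at least one strictly convex block) and in treating the $a=0$ case separately.
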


\begin{proof}
First, since $  \sum_k b_k = K_1 $ and  $b_k\ge 0$, $x_n^2 \ge 0$, it follows that
\begin{align} \label{eqn:fd_prf_theorem_1}
	K_1 \sum_n x_{n}^2 	& = \sum_k b_k \sum_n x_{n}^2 =  \sum_n \sum_k  b_k  x_{n}^2.
\end{align}
Using \eqref{eqn:fd_prf_theorem_1}, we write $\half \sum_n x_n^2 = \frac{1}{2K_1}\sum_{n} \sum_k b_k x_{n}^2$.
Therefore, the data fidelity term in problem \eqref{eqn:fd_pogs} can be written as
\begin{align}
	F(x) 	& = \half \norm{y-x}_2^2 = \half \sum_{n} x_n^2 +  L(x) \nonumber \\
			& = \frac{1}{2 K_1} \sum_{n=0}^{N-K} \Big(  \sum_k b_k x_{n+k}^2 \Big) + L(x),		
\end{align}
where $L(x)$ is linear in $x$.
Adding $ L(x) $ to a strictly convex function yields a strictly convex function.

Using the above results, the objective function in problem \eqref{eqn:fd_pogs} can be reorganized as
\begin{align} \label{eqn:fd_theorem}
	P_1(x)	& = \sum_{n=0}^{N-K} \Bigg[ \frac{1}{2 K_1} \sum_{ k } b_k x_{n+k}^2 + \lam \phi\bigg( \Big[\sum_{ k } b_k x_{n+k}^2 \Big]^{1/2} ;a \bigg) \Bigg] + L(x), \nonumber \\
			& = \sum_{n=0}^{N-K} P( B_n(x) )+ L(x),
\end{align}
where $B_n(x)$ is defined in \eqref{eqn:fd_Bn}.
As a consequence, if $ P $ is strictly convex, then $ P_1 $ is strictly convex.
The condition for convexity of $P$ is given in Proposition~\ref{prop:prop_2}.
Hence, as long as the inequality condition \eqref{eqn:fd_ogs_prop_2} is satisfied, $P$ is strictly convex.
Moreover, $\phi$ satisfies condition $ \phi''(x; a) \ge -a $ (condition~6 in Section~\ref{sec:review_OGS}).
This implies that when \eqref{eqn:fd_ogs_the_a} is satisfied, $ P $ is strictly convex, and the entire objective function $P_1$ is convex.
\end{proof}

Note that Theorem~\ref{the:the_1} generalizes the convexity condition of OGS in \cite[Corollary 2]{Chen_Selesnick_2014_GSSD}.
When every element in binary vector $b$ equals $1$ then $K_1 = K$ and Theorem~\ref{the:the_1} reduces to Corollary 2 in \cite{Chen_Selesnick_2014_GSSD} as a special case.

We have proved that under a more flexible group structure (binary weights),
non-convex penalty functions can be utilized to promote structured sparsity,
and the convexity of the objective function is preserved when the regularization parameter $a$ is suitably set.
Moreover, the result also shows that
when maximizing the non-convexity of the penalty function,
only the the nonzero weights matter for the selection of the parameter $a$.

\subsection{Algorithm derivation}

To minimize $P_1$ using the MM procedure, we define a majorizer $G: \real ^N \times \real ^N \to \real$, defined by
\begin{align}
	G(x, v) & = \half \norm{ y - x }_2^2 + \lam \sum_{n} g\big( \bnorm(x,b,n), \bnorm(v,b,n)  \big).
\end{align}
We can write $G$ as
\begin{subequations}
\begin{align}
	G(x, v) & = \half \norm{ y - x }_2^2 + \frac{\lam}{2} \sum_{n} \frac{1}{\psi( \bnorm(v,b,n)  )}\bnorm^2(x,b,n) + C 		\\
			& = \half \norm{ y - x }_2^2 + \frac{\lam}{2} \sum_{n} \sum_k\frac{ b_k }{\psi( \bnorm(v,b,n)  )}x^2_{n+k} + C 	\\
			& = \half \norm{ y - x }_2^2 + \frac{\lam}{2} \sum_n [r(v)]_n x_n^2 +C
\end{align}
\end{subequations}
where $r(v) \in \real^N$ is defined as
\begin{align}\label{eqn:fd_rn}
	[r(v)]_n : =
			\displaystyle \sum_{j=0}^{K-1} \frac{ b_j } {\psi\big(  \bnorm( v, b, n-j ) \big)}.
\end{align}

Then $G(x,v)$ can be written as
\begin{align}
	G(x, v) & = \half \sum_{n} x_n^2  + \frac{\lam}{2} \sum_n r_n x_n^2 - \sum_n y_n x_n + C(y) \\
			& = \sum_{n} \Big( \half + \frac{ \lam [r(v)]_n }{2}\Big) x_n^2 - y_n x_n +C(y)
\end{align}
which has an explicit minimizer $ x_n = {y_n} /( { 1+ \lam [r(v)]_n } ) $.
Hence, the MM iteration \eqref{eqn:fd_mm_iteration} is given by
\begin{align}\label{eqn:fd_result}
	x_n^{(i+1)} = \frac{y_n} { 1+ \lam [r(x_n^{(i)})]_n }.
\end{align}
Table~\ref{alg:ogs_b} gives the explicit steps to solve \eqref{eqn:fd_pogs},
assuming the penalty function is chosen from Table ~\ref{tab:popu_penalty} and satisfies \eqref{eqn:fd_ogs_the_a}.
This guarantees the problem \eqref{eqn:fd_pogs} is strictly convex and consequently MM procedure \eqref{eqn:fd_mm_iteration} will converge to the unique global minimizer.

Note that zero-locking might occur when using quadratic function to majorize non-smooth function \cite{FBDN_2007_TIP}.
For the OGS problem, initializing the algorithm by $x^{(0)} = y$ avoids this issue; a detailed proof is given in \cite[Lemma~B]{Chen_Selesnick_2014_GSSD}.
This lemma is not affected by introducing binary group weights.
In other words, if the function $F$ in \cite[Lemma~B]{Chen_Selesnick_2014_GSSD} is substituted  with $P_1$ \eqref{eqn:fd_pogs},
the derivation is still valid with an almost identical proof.
Moreover, since we allow $\psi(x)$ to be 0, Equation \eqref{eqn:fd_rn} may lead to a `divide-by-zero'.
The work of \cite{Chen_Selesnick_2014_GSSD} contains a sufficient discussion that this problem is avoided by the initialization $x^{(0)} = y$,
based on a same lemma. 
Hence, there is no zero-locking or `divide-by-zero' issue when solving the problem \eqref{eqn:fd_pogs} by the algorithm in Table~\ref{alg:ogs_b}.

\begin{table}[htbp]
\caption{OGS with binary weights.}
\label{alg:ogs_b}
\begin{subequations}
	\begin{empheq}[box=\fbox]{align*}
		& \text{Input:} ~y \in \real^N, ~\lam, ~b\in \{ 0, 1 \}^{K}  		\nonumber		\\
		& \text{Initialization:} ~x = y, ~ S = \{ n : y_n \neq 0\} \\
		& \text{Repeat for $ n \in S$:}  \\
		& \qquad r_n = \sum_{j=0}^{K-1} \frac{ b_j } { \psi\big(  \bnorm( x, b, n-j ) \big) } \\
		& \qquad x_n =  \frac{y_n} { 1+ \lam r_n } \\
		& \qquad S = \{ n: \abs{x_n} > \eps \} 	  \\
		& \text{Until convergence} \\
		& \text{Return: } x
	\end{empheq}
\end{subequations}
\end{table}

\subsection{Periodicity-induced OGS (POGS)}

In the previous sections, we have given a method for group sparse denoising with binary weights within the group.
In signal model \eqref{eq:signal}, since the periodicity of impulsive faults in $x$ is assumed to be approximately consistent over a reasonable duration, the time interval between two consecutive faults can be considered identical within the support of a group.
Moreover, when the period $T$ of a potential fault is known or predictable from the knowledge of the machinery,
we can select the group with a length $K$
and its zero and non-zero entries by
\begin{subequations}\label{eqn:fd_preoid_K}
\begin{align}
 	 N_0 + N_1  &\approx f_s T, \\
	 N_0 + N_1  &= K /M,
\end{align}
\end{subequations}
where $N_1$ and $N_0$ are the estimated length (in samples) of impulsive transients and the time interval between them,
and integer $ M \ge 2 $ is the number of periods contained in one group, and $f_s$ is sampling rate.
Thus, in one group, the numbers of zero and non-zero entries are $K_0 = M N_0$ and $K_1 = M N_1$, respectively.

Moreover, when the transient sequence is periodical,
the binary weight $b_k$ are chosen according to a periodic group structure. Specifically, under the constraint \eqref{eqn:fd_preoid_K},
$b \in \{ 0,1 \}^K$ is given by
\begin{align}\label{eqn:fd_preoid_b}
		b = [
			\ 	\underbrace{ 1, 1, \ldots , 1 }_{N_1},
			\	\underbrace{ 0, 0, \ldots , 0 }_{N_0},
			\	\underbrace{ 1, 1, \ldots , 1 }_{N_1},
			\ 	\underbrace{ 0, 0, \ldots , 0 }_{N_0} ,
			\	\ldots, \	
			\ 	\underbrace{ 1, 1, \ldots , 1 }_{N_1},
			\ 	\underbrace{ 0, 0, \ldots , 0 }_{N_0} \
			],
\end{align}
where in each period, there are $N_1$ non-zero entries grouped according to the impulsive signal,
and the entire group comprises $M$ periods.
In this case, the last $N_0$ zeros in \eqref{eqn:fd_preoid_b} has no effect in problem \eqref{eqn:fd_pogs} by the definition \eqref{eqn:fd_bnorm}.
In practice, we omit trailing zeroes and the actual length of $b$ involved in computation is $K-N_0$.
Note that although we use the parameters such as $K, K_0, K_1, N_0, N_1$ and $M$ to illustrate the binary pattern structure,
in fact given $f_s$ and $T$, we only need to select $N_1$ and $M$,
then use \eqref{eqn:fd_preoid_K}, the pattern of $b$ \eqref{eqn:fd_preoid_b} can be determined.

Consequently, we propose to recover a periodic impulsive signature from a noisy observation by solving problem \eqref{eqn:fd_pogs} with $b$ defined by \eqref{eqn:fd_preoid_K} and \eqref{eqn:fd_preoid_b}.
We refer to this method as Periodicity-induced OGS (POGS). This is an extension of OGS accounting for the periodicity of the sparse signal.

As a special case, if the period $T$ is not known, we may use conventional OGS \eqref{eqn:fd_cost_function_GSSD} to detect faults.
If a period $T$ can be determined by inspecting the output produced by OGS,
then an enhanced result with better accuracy may be achieved by POGS using the determined periodicity.

\section{Simulation validation}

\begin{figure}[t]
	\centering
    \includegraphics[scale = \figurescale]{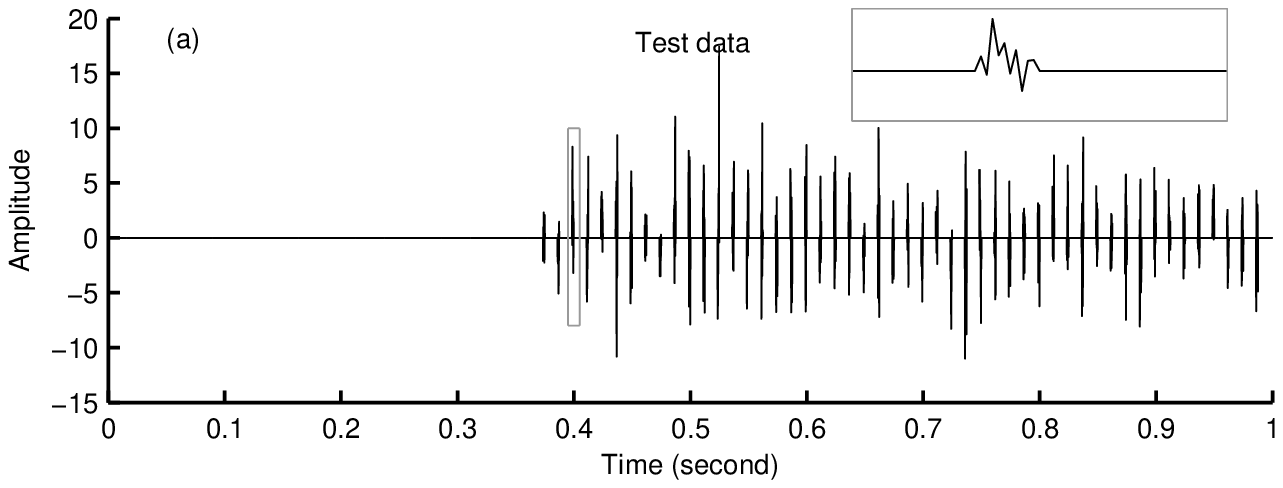}
    \\
    \includegraphics[scale = \figurescale]{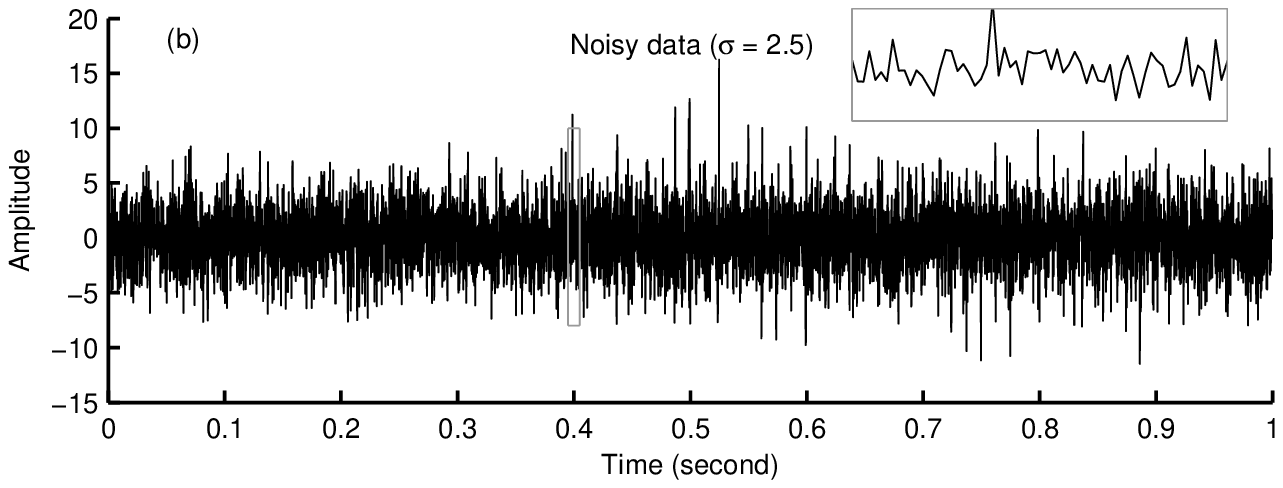}
	\caption{ Example 1: Simulated signal: (a) clean data and (b) noisy data.}
	\label{fig:example1_data}
\end{figure}

\begin{figure}[!t]
	\centering
    \includegraphics[scale = \figurescale]{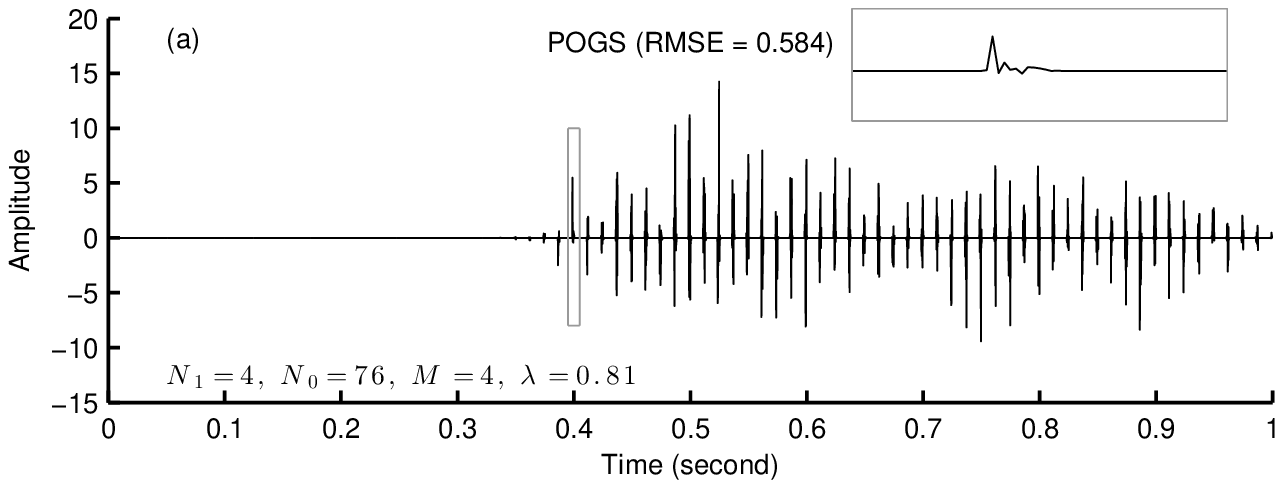}
    \\
    \includegraphics[scale = \figurescale]{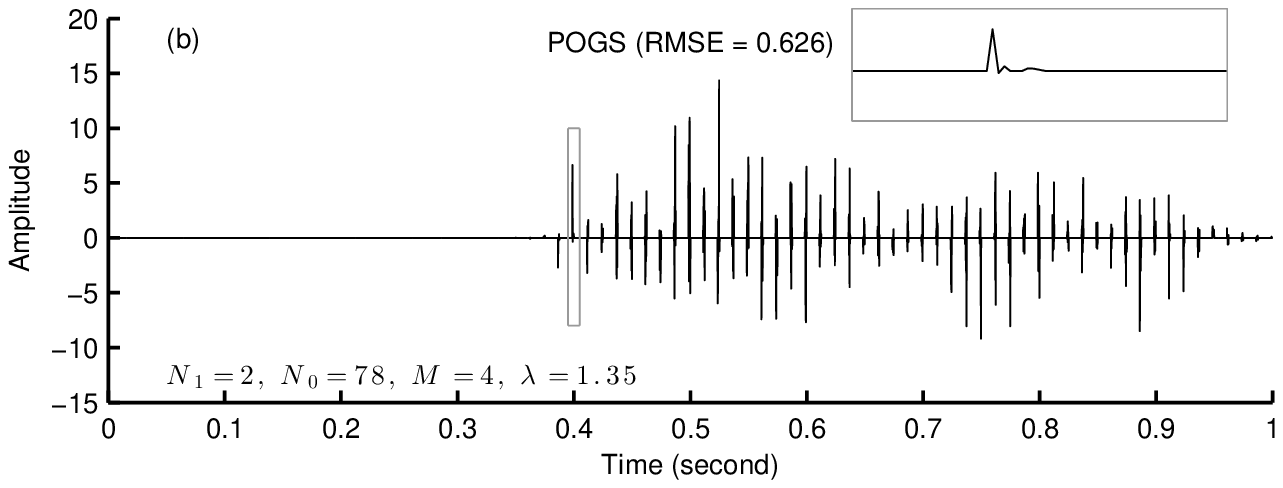}
    \\
	\includegraphics[scale = \figurescale]{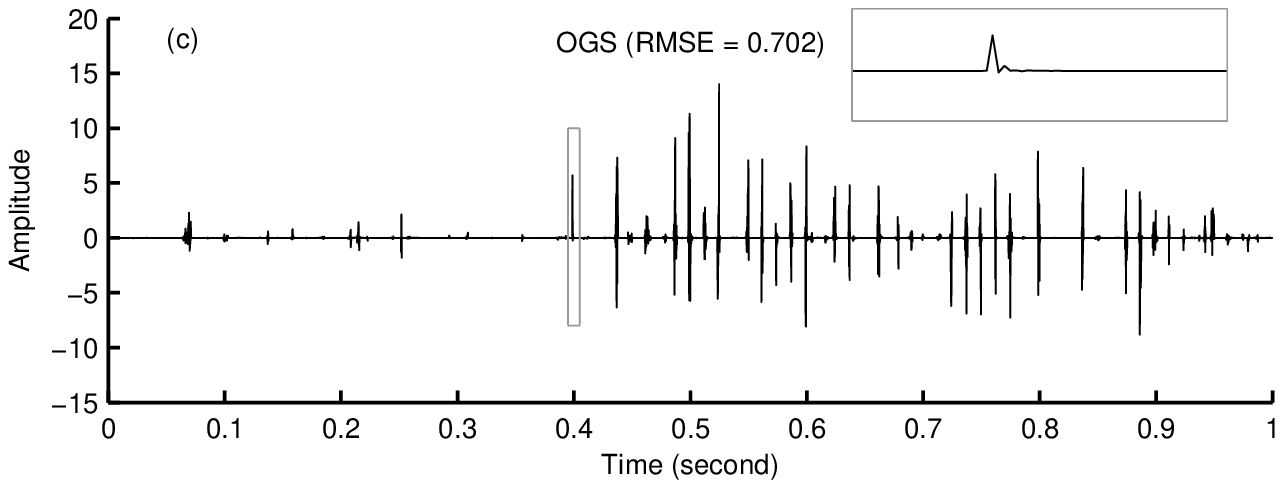}
	\\
	\includegraphics[scale = \figurescale]{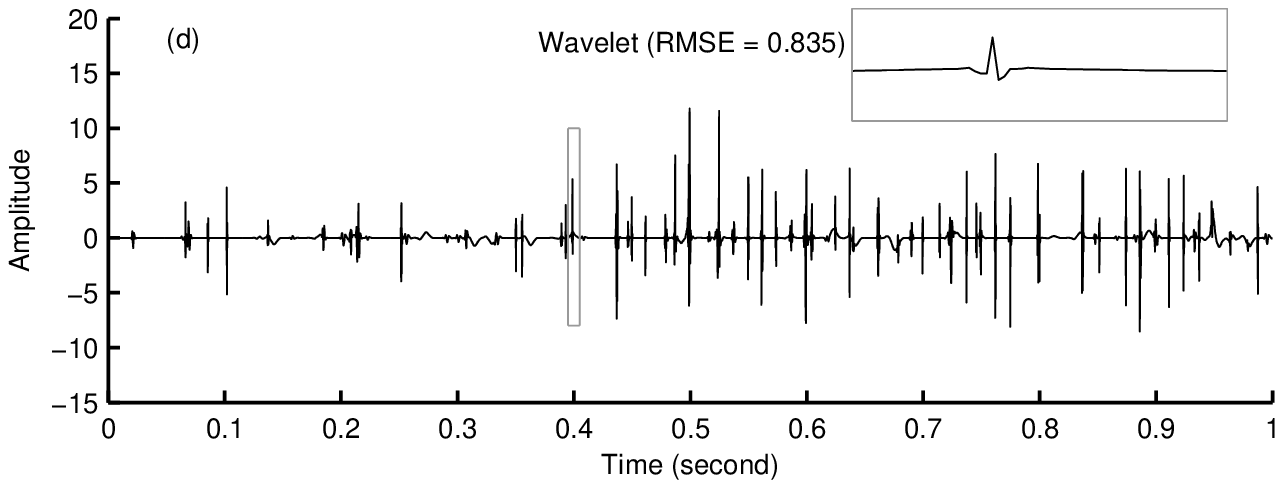}
	\caption{ Example 1: Denoising results.
		(a) Proposed method with $N_1 = 4, M = 4$.
		(b) Proposed method with $N_1 = 2, M = 4$.
		(c) Conventional OGS.
		(d) Wavelet-based denoising.}
	\label{fig:example_1_results}
\end{figure}

To test the proposed method, we apply it to the simulated data illustrated in \figref{fig:example1_data}(a).
The signal is a 1-second signal with sampling rate $ f_s = 6400 $ Hz, and is composed of a periodic sequence of transients occurring with $80$ Hz.

We simulate the vibration signal containing features caused by machinery defect
as a sequence of impulsive transients,
and each transient consists for 10 samples (1.6 ms when $f_s = 6400$ Hz).
In this example, each transient is composed of a random number of sinusoidal components, each with random frequency and random initial phase.
More specifically, each transient can be written as
\begin{align}\label{eqn:fd_transient}
	v(n) = \sum_{i=1}^{U} A_i \sin( \omega_i n + \beta_i ), \quad n = 0,1,2 \ldots 9,
\end{align}
where $1 \le U \le 10$ is a random integer, and for each $i$,
$A_i$ is a random amplitude,
and $\omega_i $ is a random frequency,
and $\beta_i $ is a random initial phase.
The sequence of transients is shown in \figref{fig:example1_data}(a),
and we show the detail of one transient at about $t = 0.40$ second in the box.
The generated test signal is multiplied by a constant,
so that it has unit standard deviation.

In order to evaluate the false alarm rate, the first part of the test signal contains no transient.
\figref{fig:example1_data}(a) shows the clean test signal, where there are 50 faults starting at approximately $t = 0.36$ second with a period $T = 1/80$ second.
White Gaussian noise with standard deviation $ \sigma = 2.5 $ is added to the simulated fault sequence, as illustrated in \figref{fig:example1_data}(b).

\begin{figure}[t]
\centering
	\includegraphics[scale = 0.2]{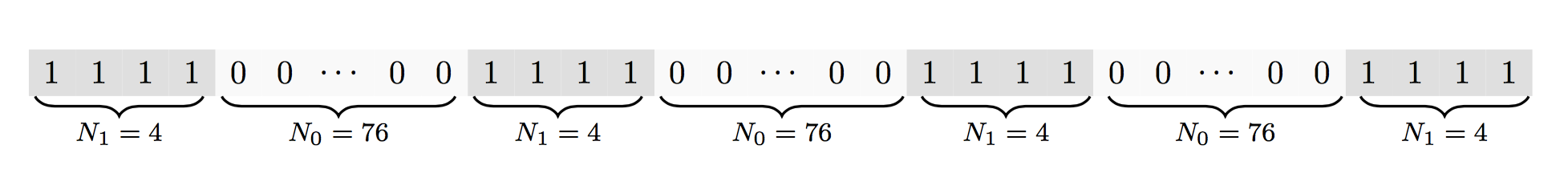}
\caption{Example~1: Binary pattern $b$ when $M = 4$, $N_1 = 4$.}
\label{fig:fd_example_1_b}
\end{figure}

When the periodicity of the faults is known, we can use POGS, where
\eqref{eqn:fd_preoid_K} and \eqref{eqn:fd_preoid_b} are used to define the binary weight vector $b$.
In this example, we set $ N_1 = 4 $ and $ M = 4 $. Since $ T = 1/80 $ seconds, we can calculate $ N_0 = 6400/80 - N_1 = 80 - 4 = 76 $ according to conditions \eqref{eqn:fd_preoid_K},
then the explicit pattern of $b$ can be determined as \figref{fig:fd_example_1_b}.
The denoising result is shown in \figref{fig:example_1_results}(a),
and the transients can be easily identified with an almost pure zero baseline.
In this example, we use OGS and POGS with the atan penalty function and non-convexity parameter $a$ set to its maximum value of $1/(K_{1} \lam )$,
so as to maximally induce sparsity subject to the constraint that the objective function $P_1$ is convex.

\figref{fig:example_1_results}(b) shows another example using $N_1 = 2$ to determine the pattern $b$.
The result is slightly worse in RMSE than \figref{fig:example_1_results}(a),
because $N_1 =2 $ in this example does not match the simulated data.
However, $ N_1 = 2 $ is the lower limit of a realistic value,
because in practice it is very rare that the transients are all single-sample spikes, when the data is properly measured.
As a consequence, the result in \figref{fig:example_1_results}(b) can be understand as the worst case of choosing an inappropriate $b$.

\figref{fig:example_1_results}(c) shows the result of denoising using conventional non-periodic OGS \eqref{eqn:fd_cost_function_GSSD}
with group size $ K = 8 $ and the arctangent (atan) non-convex penalty function.
The regularization parameter $\lam$ is chosen by the look-up table in \cite{Chen_Selesnick_2014_OGS},
which sets $\lam$ proportional to the noise $\sigma$.
In this example, we set $ \lam = 0.52 \sigma $.
The result in \figref{fig:example_1_results}(c) misses some faults and yields several false detections,
e.g., the ones at about $0.82$ second, and some false transients appear before $t = 0.36$ second.

As a comparison, we adopt conventional wavelet-based denoising method to the test signal.
More specifically, a 5-scale undecimated wavelet transform \cite{Coifman_1995} with 3 vanishing moments is used,
and the result is shown in \figref{fig:example_1_results}(d).
For denoising, we apply hard-thresholding and chose the threshold value by $3\sigma$-rule for each subband.
In \figref{fig:example_1_results}(d), although some large amplitude transients can be recovered at correct locations,
they exhibit the same shape as the chosen wavelet
[see the box in \figref{fig:example_1_results}(d)].

\begin{figure}[!t]
	\centering
	\quad
    \includegraphics[scale = \figurescale]{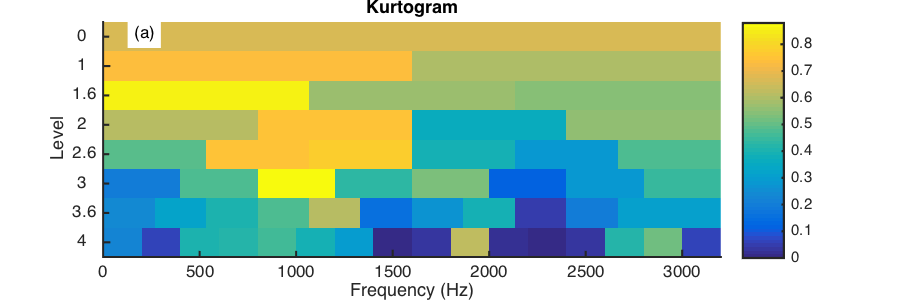}
	\\
    \includegraphics[scale = \figurescale]{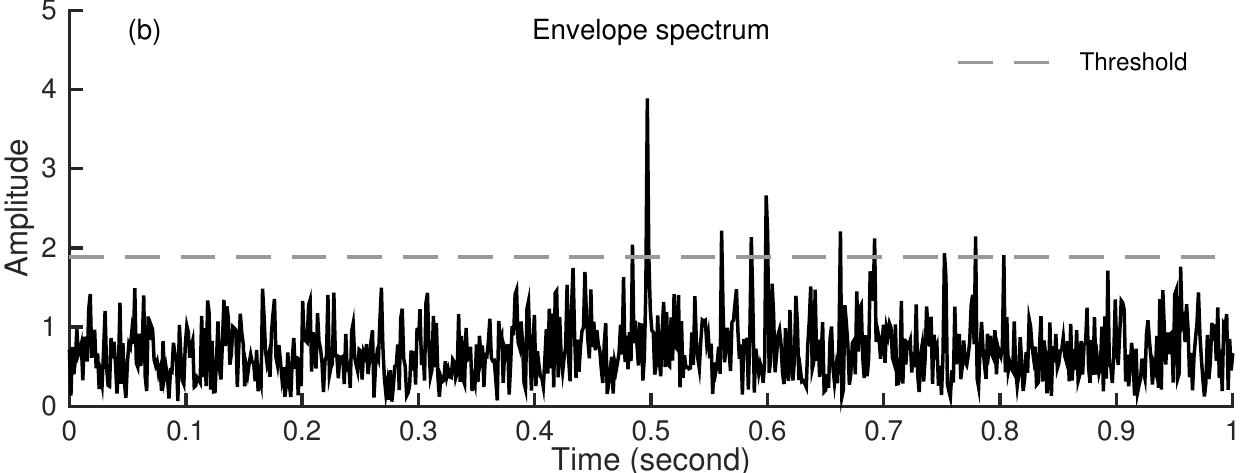}
	\caption{ Example 1: Results of fast spectral kurtosis.
	(a) Kurtogram. (b) Envelope of extracted transients.}
	\label{fig:example1_kurt}
\end{figure}

We compare the performance with fast spectral kurtosis \cite{antoni2006spectral}
%
This method produces the kurtogram in \figref{fig:example1_kurt}(a),
where the kurtosis maximum is at the third level with an estimated  `optimal carrier frequency' at 1000 Hz [see the bright yellow area in \figref{fig:example1_kurt}(a)].
The corresponding amplitude of the extracted transients is shown in \figref{fig:example1_kurt}(b),
with an automatically generated threshold shown as a gray dashed line.
The peaks after $ t = 0.36 $ seconds have a greater density, which indicates that it is more likely that faults occur after $t = 0.36$ seconds.
However, the useful repetitive transients are surrounded by strong irrelevant noise.

\subsection{Parameter selection}

\begin{figure}[!t]
	\centering
    \includegraphics[scale = \figurescale]{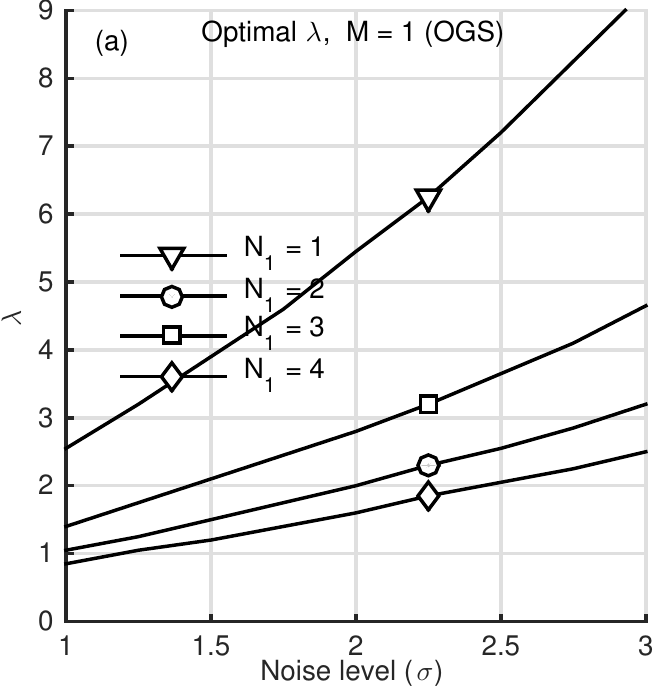}
    \quad
    \includegraphics[scale = \figurescale]{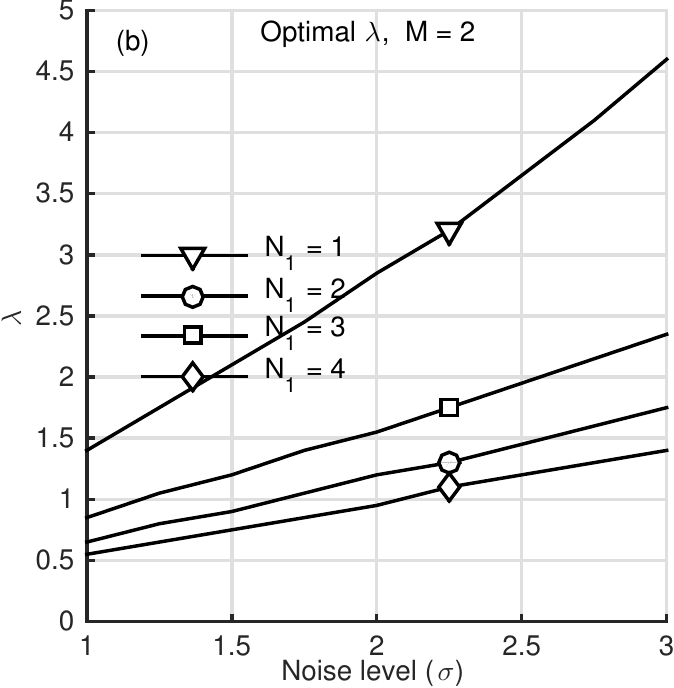}
    \\
   	\includegraphics[scale = \figurescale]{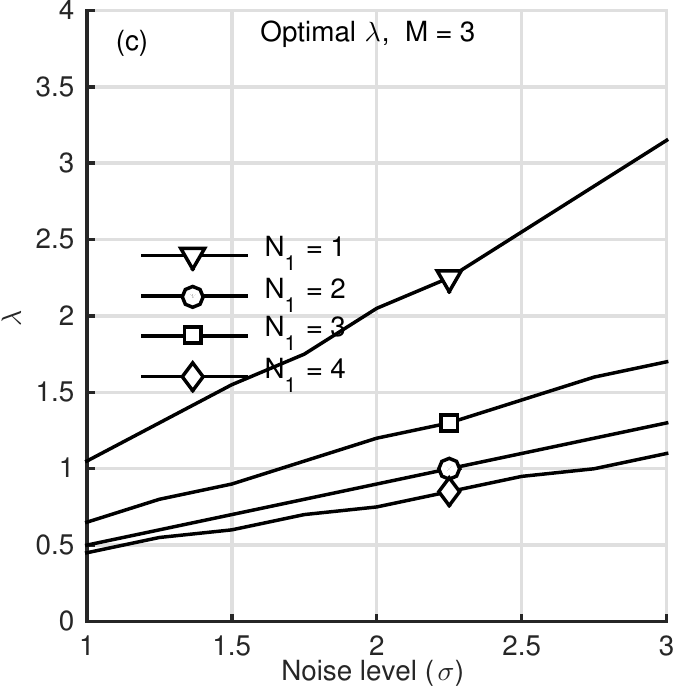}
	\quad
    \includegraphics[scale = \figurescale]{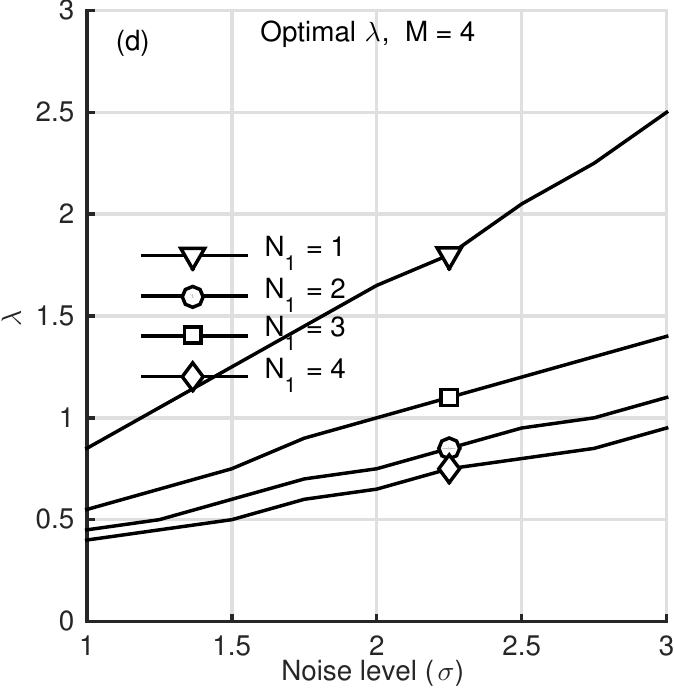}
	\caption{ Example 1: Optimal $\lambda$ at different noise level.
	(Note that the vertical axes of the sub-figures are different.)}
	\label{fig:example1_sigma}
\end{figure}


\textbf{Setting pattern $b$.}
In many cases of fault diagnosis,
it is feasible to estimate the period of the transients based by component geometry and rolling speed,
but it might be difficult to estimate the duration of the transient.
However, since the binary structure $b$ is acting as a sliding window capturing the global periodicity structure,
it is not necessary to match the length of a transient exactly.
As a consequence, we suggest to constrain the value of $N_1$ relatively small, e.g. $ 2 \le N_1 \le 4$.
However, if the sampling rate is quite high and a transient may contain more samples, then the value of $N_1$ can be specified greater than 4.
The value of $M$ determines the number of non-zero entries (the value of $K_1$) in $b$,
and $K_1$ effects the non-convexity of the regularizer in the proposed problem \eqref{eqn:fd_pogs}.
In our experiments, we keep use $M = 4$.

\medskip
\textbf{Setting regularization parameter $\lam$.}
In order to explore the correlation of the regularization parameter $\lam$ to the binary pattern $b$ \eqref{eqn:fd_preoid_b},
we show the optimal $\lambda$ as a function of the standard deviation $\sigma$ of noise using different binary patterns
in \figref{fig:example1_sigma}.
We define the optimal $\lam$ as the value minimizing the RMSE for each fixed binary pattern $b$.

In this test, using the data in \figref{fig:example1_data}(a),
we search for $\lam$ minimizing an average RMSE (20 trials) at each noise level generated by different random seeds.
We present the results of select $M$ from 1 to 4 in \figref{fig:example1_sigma}(a) to (d), respectively.
In each figure, we show the optimal $\lam$ as a function of $\sigma$ under different $N_1$.
Note that, in \figref{fig:example1_sigma}(a), when $M=1$ and $N_1 = 1$,
the proposed method adheres neither grouping nor periodic structure,
in which case, the problem \eqref{eqn:fd_pogs} coincides to the BPD problem with non-convex regularizer.

Note that since we simulate the test signal with a unit standard deviation,
the horizontal axis in \figref{fig:example1_sigma} is also the ratio of deviation from noise to data.
In other word, we can use \figref{fig:example1_sigma} as a look-up table,
when the input noise-to-signal ratio (SNR) is known.

Moreover, all the 16 curves in \figref{fig:example1_sigma}
show that the optimal $\lam$ varies approximately linearly with noise level.
In practice, we suggest to chose $\lam$ in \eqref{eqn:fd_pogs} proportional to the noise level as $\lam = r \sigma$.
Through further experiments, we provide Table~\ref{tab:fd_r} as a guide for choosing the multiplier $r$.

\begin{table}[!h]
	\caption{Selection of $r$ for setting $\lambda$.} 
	\label{tab:fd_r}
\begin{center}
    {
		\begin{tabular}{ l  c cccc }
		\toprule
		 \backslashbox{$M$}{$N_1$}	 &1   &2 &3 &4 \\[0.2em]
		\midrule
		~1 	 &3.700   &1.700 &1.150 &0.925	\\
		~2 	 &1.700   &0.850 &0.625 &0.475	\\
		~3 	 &1.150   &0.625 &0.450 &0.375	\\
		~4 	 &0.925   &0.475 &0.375 &0.325	\\[0.1em]
		\bottomrule
		\end{tabular}
	}
	\end{center}
\end{table}

\subsection{Receiver operating characteristic (ROC) evaluation}

\begin{figure}[!t]
	\centering
	\includegraphics[scale = \figurescale]{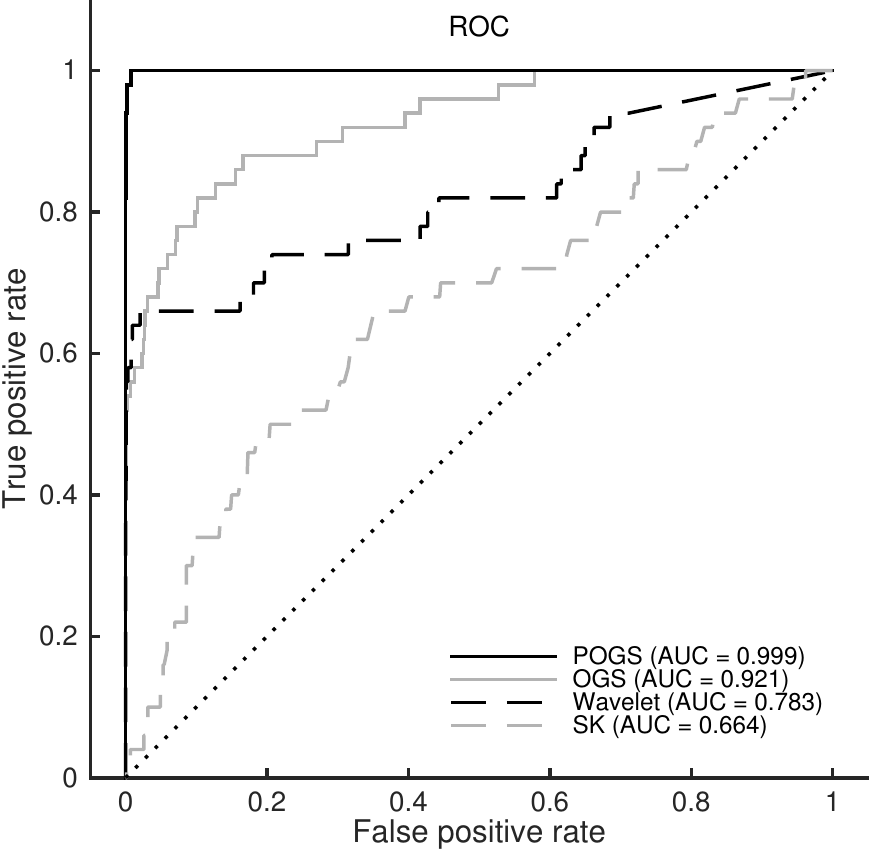}
	\caption{ Example 1: ROC curves of faults detection.}
	\label{fig:example_1_roc}
\end{figure}
\begin{figure}[!t]
	\centering
	\includegraphics[scale = 0.5]{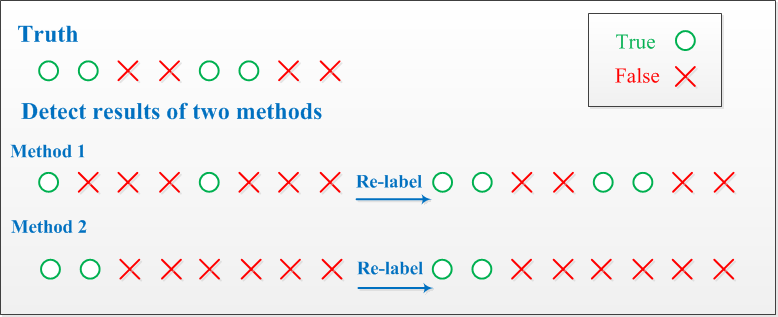}
	\caption{ Example of classification rule.}
	\label{fig:example_1_detect}
\end{figure}

The above comparative evaluation of OGS, POGS, wavelet, and fast spectral kurtosis uses RMSE as an indicator of denoising performance.
However, the RMSE by itself is not a sufficient indicator of fault detection accuracy.
In the following, we use a receiver operating characteristic (ROC) based approach to evaluate the relative accuracy of the methods.

The receiver operating characteristic (ROC) curve is a well known detection performance evaluation methodology \cite{lee1997higher}.
ROC curves are well suited to the problem of vibration-based diagnosis applications \cite{nichols2008using}.
An ROC curve is generated by plotting the probability of a false alarm against the probability of detection as the threshold level is varied.
Since the POGS approach is focused on machinery fault detection, the ROC curve is utilized to validate the superior detection performance of POGS compared to other methods.
We define the classification rule as:
if one sample in a transient [generated by \eqref{eqn:fd_transient}] is detected as positive,
then this entire transient a fault feature is detected and all the remaining samples are all assigned to be positive.

This rule is slightly different to the one used in \cite{nichols2008using},
since using the sample-wise decision rule in \cite{nichols2008using} may cause problem of overweighting sample recovery,
but neglecting detecting fault features as transients of a cluster of samples.
We use a simple example in \figref{fig:example_1_detect} to illustrate the problem and our classification rule.
Suppose that the true signal has 8 samples consists 2 periods (4 samples for each period),
and each period has 2 samples positive (labeled as circles).
If Method~1 detects 2 samples belonging to two different transients respectively,
and method 2 detects 2 samples belonging to only one transients,
then the rule of \cite{nichols2008using} will give that the two methods have identical accuracy,
because if merely counting the samples, they have a same number of samples detected.
This is undesirable, since Method~2 misses an entire transient.
To overcome this issue, we re-label the detect result,
where if any sample in a transient of each period is detected, we re-label all the samples in the entire transient to be detected.
In the example of \figref{fig:example_1_detect}, after the re-labeling, Method~1 has a better accuracy because it detects both of the transients.

\figref{fig:example_1_roc} shows the ROC curves from using OGS, POGS and wavelet-based method respectively.
POGS achieves an almost perfect detection result. Also, OGS is better than wavelet-based method.
In this example, because the results from POGS with different parameter settings
[in \figref{fig:example_1_results}(a) and \figref{fig:example_1_results}(b)] obtain almost identical ROC curves,
we chose to show the one in \figref{fig:example_1_results}(a) where $N_1=4$ and $M=4$.

We also use the extracted envelope from fast spectral kurtosis method to perform a similar ROC analysis.
Note that although we plot it together with other ROC curves in \figref{fig:example_1_roc},
since its envelope has a different length than the other results, the ROC curve is generated by a different number of samples.

\section{Experimental and engineering data validation}

\subsection{Example 2: Rolling bearing with defect on outer race}

\begin{table}[t]
  \centering
    \caption{The parameters of the tested bearing} \medskip
  \begin{tabular}{@{} ccccc @{}} 
    \toprule
    Inner Race & Outer Race & Roller & Number of roller & Contact angle \\
    \midrule
    160 mm & 290 mm & 34 mm & 17 & $0^{\degree}$ \\
    \bottomrule
  \end{tabular}
  \label{tab:example_3_parameters}
\end{table}

\begin{figure}[!t]
	\centering
    \includegraphics [scale = 0.8]{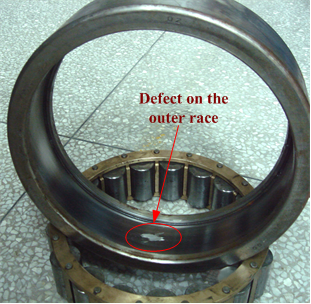}
	\caption{Example~2: Fault on the outer race of the testing bearing.}
	\label{fig:example_3_defect}
\end{figure}

\begin{figure}[!t]
	\centering
    \includegraphics [scale = \figurescale]{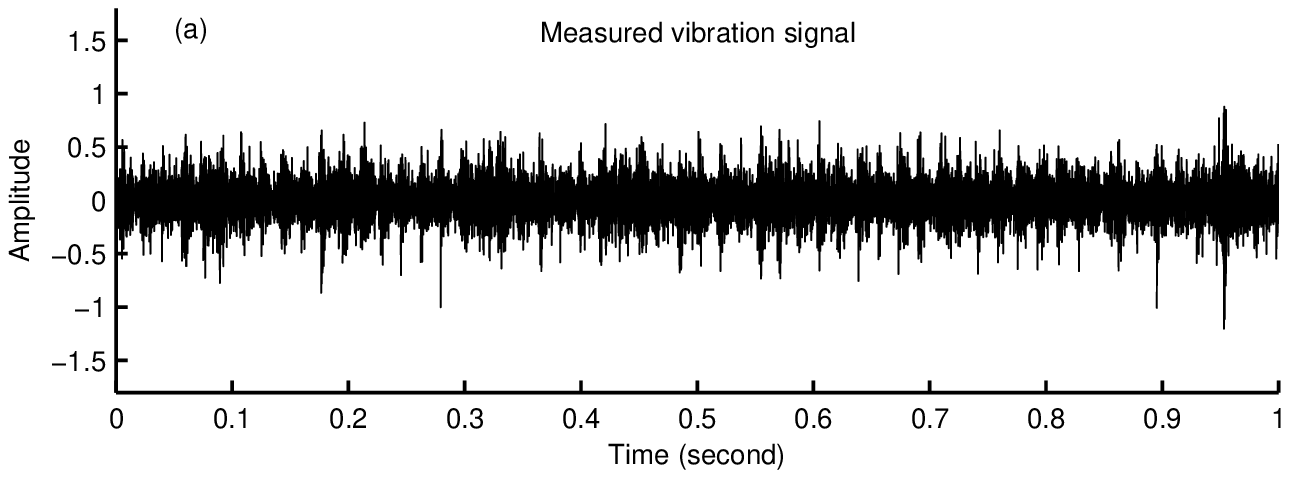}
    \\
    \includegraphics [scale = \figurescale]{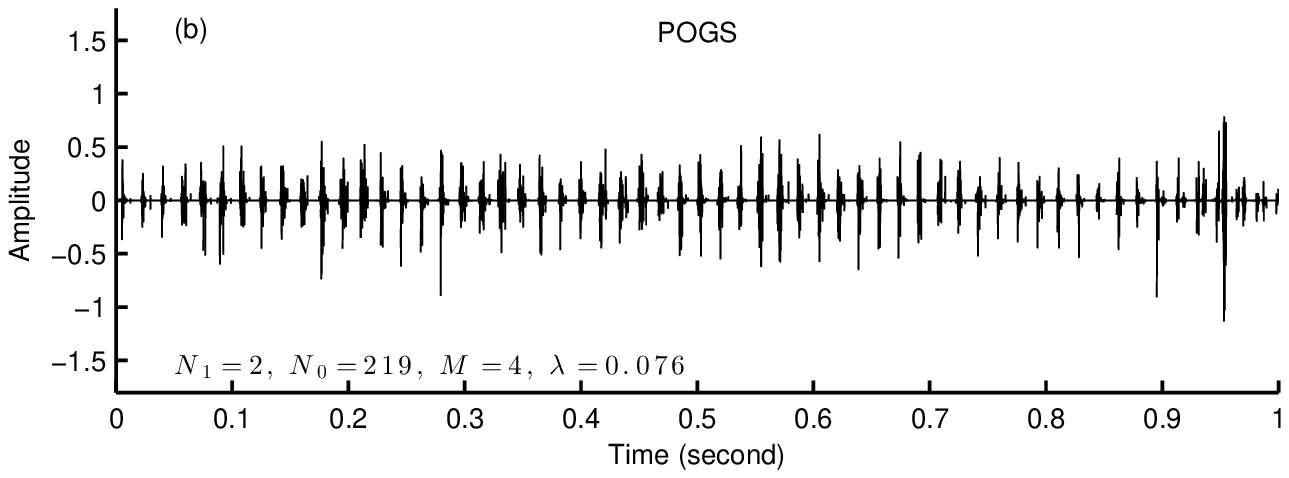}
    \\
    \includegraphics [scale = \figurescale]{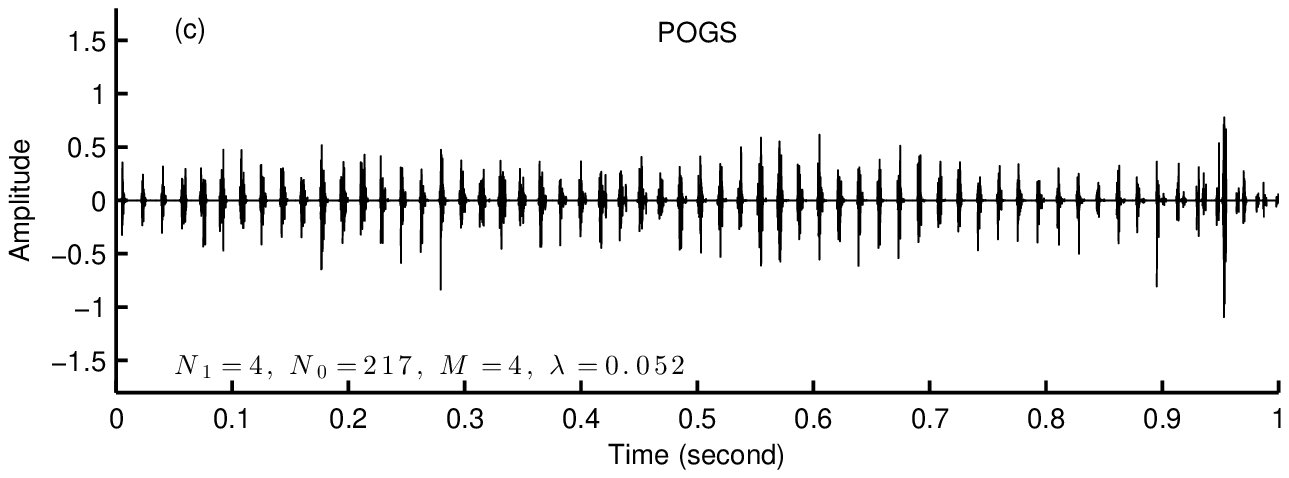}
    \\
    \includegraphics [scale = \figurescale]{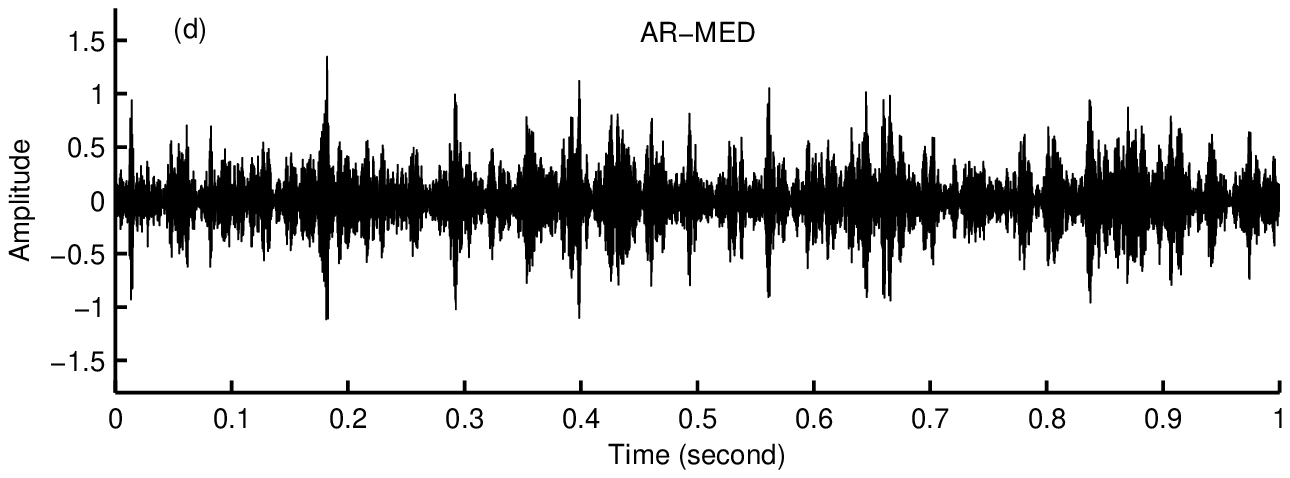}
	\caption{Example~2:  (a) Test data.
		(b) Result of proposed method with $N_1 = 2, M =4$.
		(c) Result of proposed method with $N_1 = 4, M =4$.
		(d) Result of AR-MED.}
	\label{fig:example_3_x}
\end{figure}

In this example, the proposed approach is applied to a vibration signal collected from a locomotive rolling element bearing with defect.
The testing locomotive rolling bearing with a slight scrape on its outer race is shown in \figref{fig:example_3_defect}.
The vibration signal is measured from acceleration sensors, using SONY Ex data acquisition system when the electric locomotive was running.
The bearing type is 552732QT and its specification is shown in \tabref{tab:example_3_parameters}.
The sampling rate is 12.8 kHz and the current rotating speed is approximately 481 rpm.
Thus, based on the geometric parameters and rotational speed, the characteristic frequency of the outer race defect is calculated to be $f_o = 57.8$ Hz.

To adopt the proposed POGS \eqref{eqn:fd_pogs}, 
if the duration of a transient is uncertain,
we can directly chose the pattern $b$ \eqref{eqn:fd_preoid_b} with $M =4$ and $N_1=2$,
then by \tabref{tab:fd_r} we can chose $\lambda$ with the respect to the noise level.
Note that since the `noise' in the vibration data for fault detection is the background out of the transient sequence,
in practice it can be easily estimated using healthy data.

In this example, we illustrate how to determine $\lam$ without healthy data.
We estimate the deviation of background by the formula
\begin{align}\label{eqn:mad}
	\hat \sigma = \mathsf{MAD}(y) / 0.6745
\end{align}
which is a conventional estimator of noise level used for wavelet-based denoising \cite{wav_Donoho_93_ideal},
where $ \mathsf{MAD}$ in \eqref{eqn:mad} stands for median absolute deviation, defined as
\begin{align}\label{eqn:mad_2}
	\mathsf{MAD}(y)  :=  \mathsf{median} ( \abs{  y_n - \mathsf{median}(y)  }).
\end{align}

In this example, the estimated deviation of background is $ \hat \sigma = 0.1606 $ obtained by \eqref{eqn:mad}.
The parameter $\lam$ can be obtained using $\lam = r \hat \sigma$ where $r = 0.475$ (from Table~\ref{tab:fd_r}).
The result is shown in \figref{fig:example_3_x}(b), where a periodical phenomenon can be easily identified.
Moreover, we also test POGS with $M = 4$ and $N_1 = 4$, using the same method to set $\lam$.
The result is shown in \figref{fig:example_3_x}(c).

As a comparison, we also test the data with autoregression model assisted minimum entropy deconvolution (AR-MED) \cite{fd_Endo_2007}\footnote{
Implementation available online \url{http://www.mathworks.com/matlabcentral/fileexchange/41614-ar-filter-+-minimum-entropy-deconvolution-for-bearing-fault-diagnosis}}.
The order of AR filter is chosen by maximizing Kurtosis value as the implementation suggested,
and the estimated filter in the MED step has a length of 50.
The result is shown in \figref{fig:example_3_x}(d), where most of the impulses are promoted after deconvolution.
However, since the baseline is still relatively noisy, the regularity of periodicity is not very clear.

\subsection{Example 3: Motor bearing with multiple faults}

\begin{figure}[!t]
	\centering
    \includegraphics [scale = 0.4]{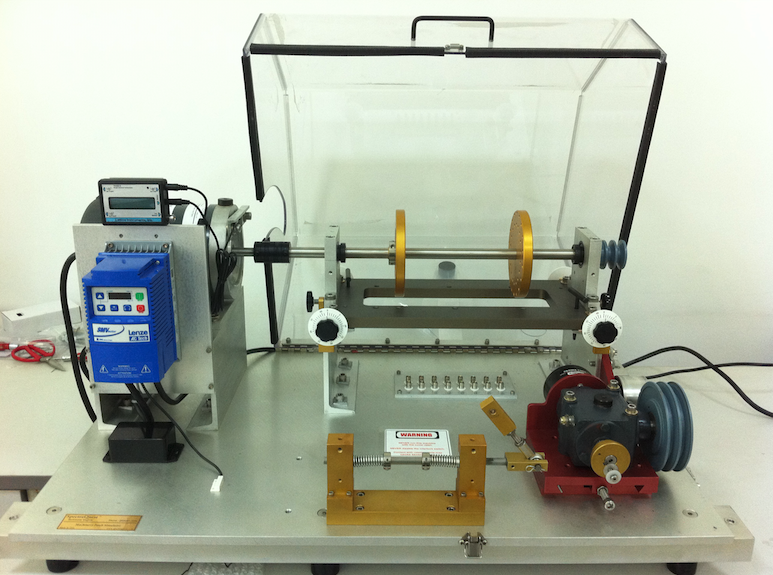}
	\caption{Example~3: Spectra Quest's machinery fault simulator.}
	\label{fig:Experimental_setup}
\end{figure}
\begin{figure}[!t]
	\centering
    \includegraphics [scale = \figurescale]{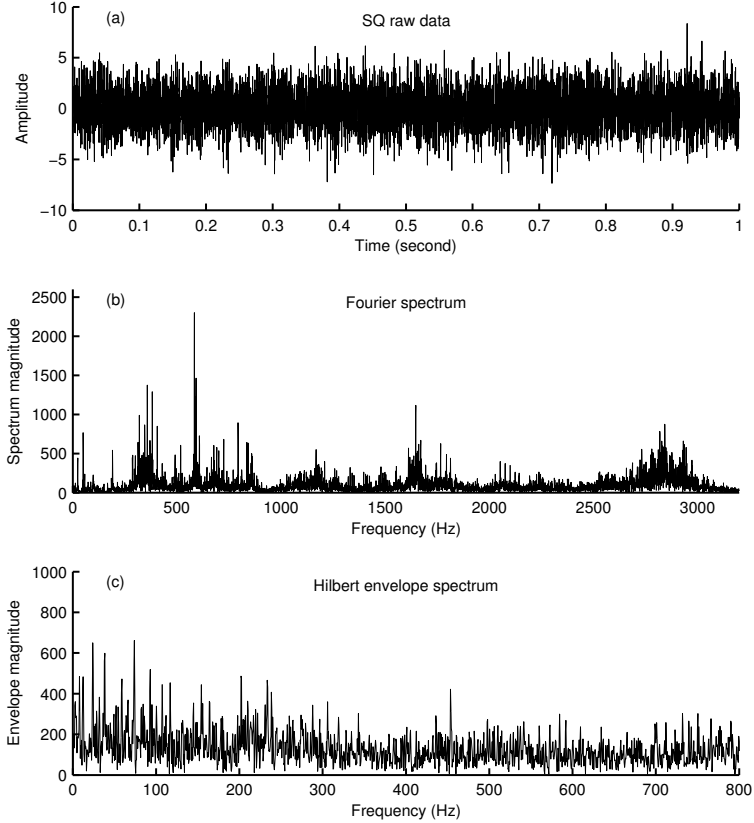}
	\caption{Example 3: Test data.
		(a) The measured vibration signal of the motor housing.
		(b) Fourier spectrum.
		(c) Hilbert envelope spectrum.
	}
	\label{fig:example_2_y}
\end{figure}

\begin{figure}[!t]
	\centering
    \includegraphics [scale = \figurescale]{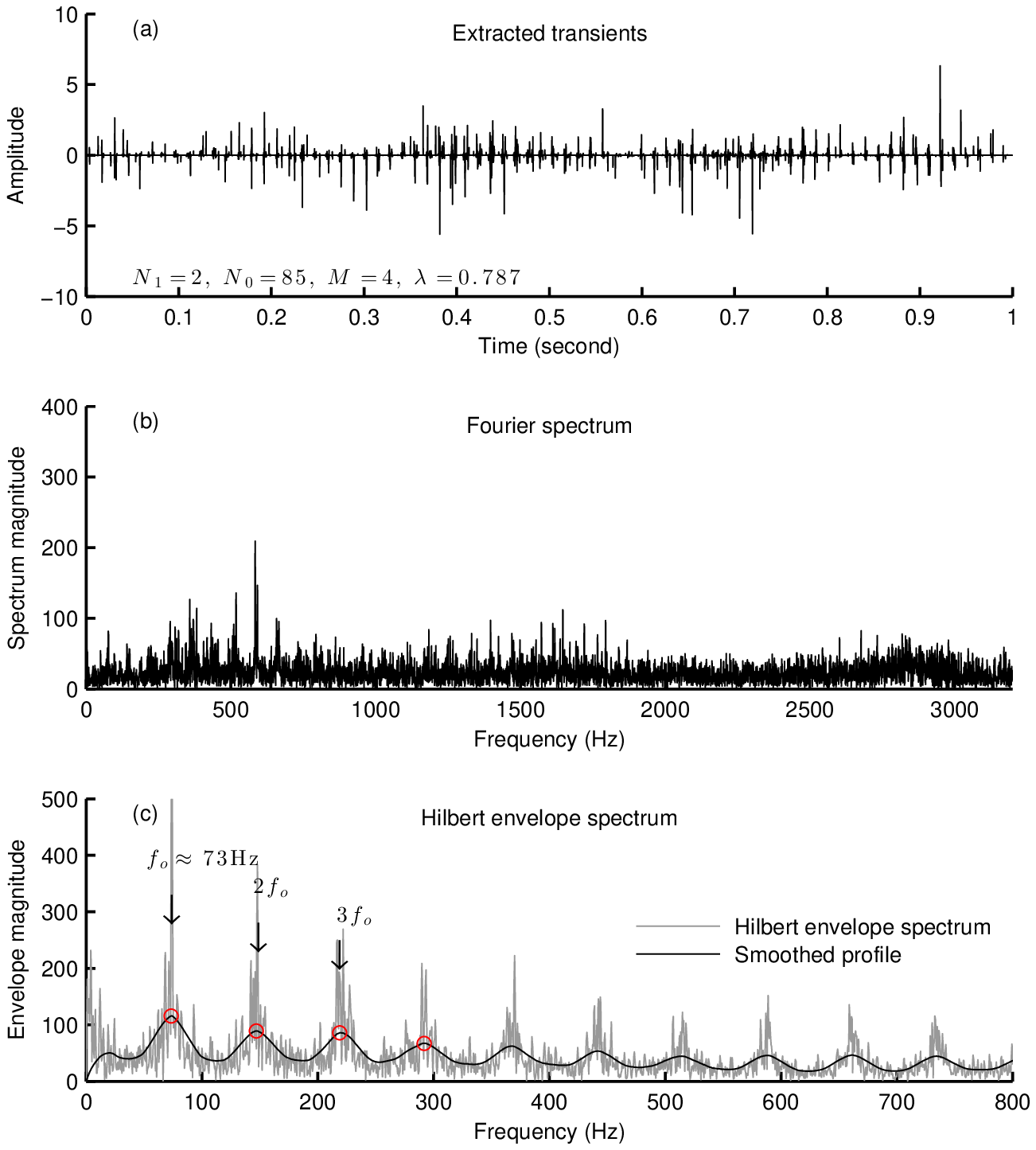}
	\caption{Example 3: Outer race faults detection ($N_1 = 2, M = 4 $): (a) Extracted transients, (b) Fourier spectrum and (c) Hilbert envelope spectrum.}
	\label{fig:example_2_x1}
\end{figure}

\begin{figure}[!t]
	\centering
    \includegraphics [scale = \figurescale]{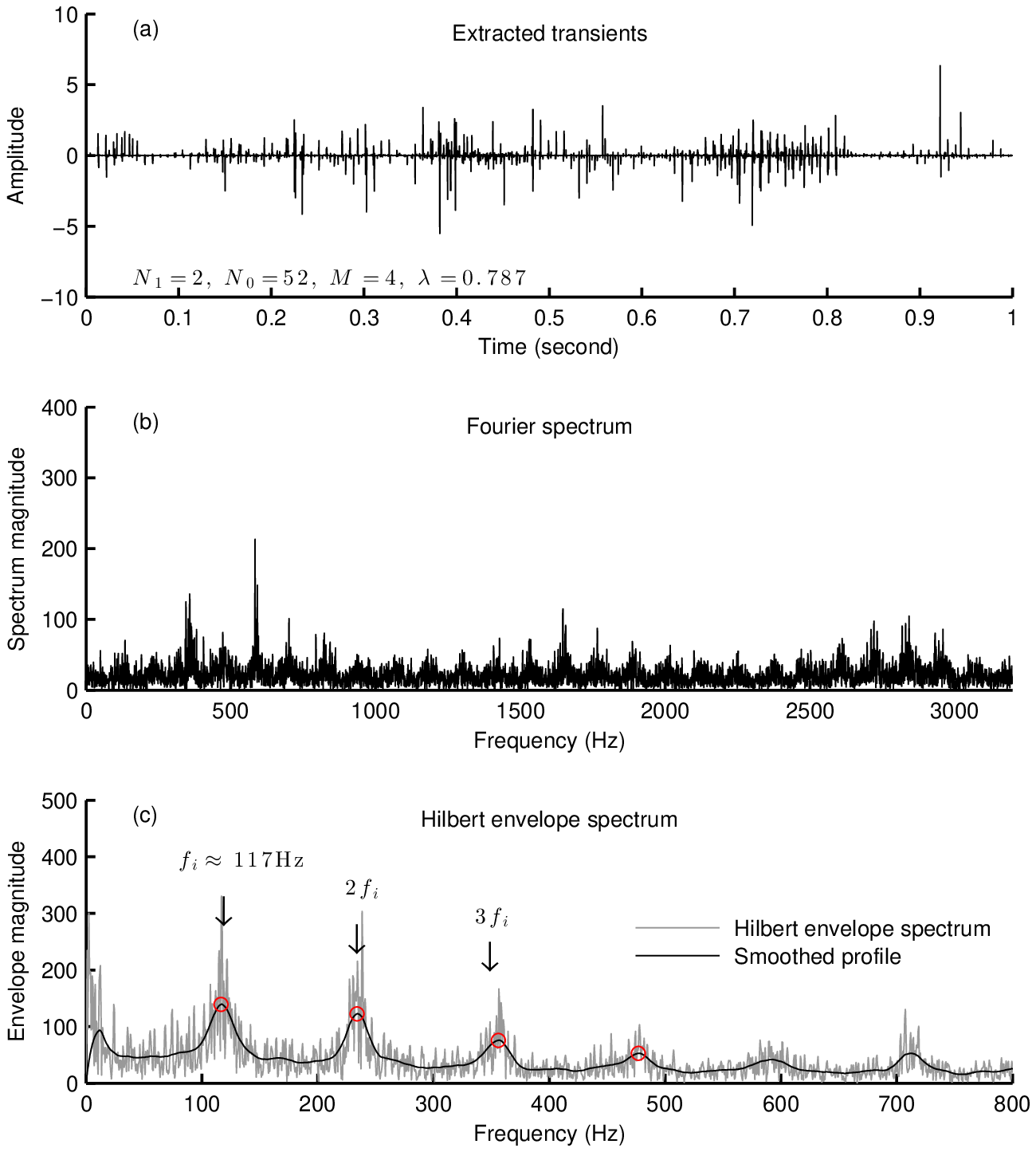}
	\caption{Example 3: Inner race faults detection ($N_1 = 2, M = 4 $): (a) Extracted transients, (b) Fourier spectrum and (c) Hilbert envelope spectrum.}
	\label{fig:example_2_x2}
\end{figure}

\begin{figure}[!t]
	\centering
	\quad
    \includegraphics[scale = \figurescale]{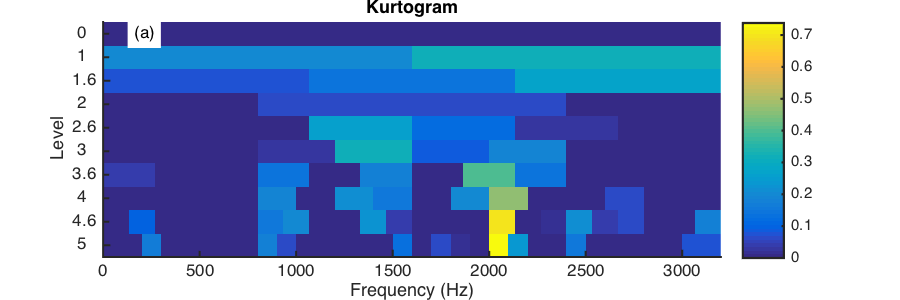}
	\\
    \includegraphics[scale = \figurescale]{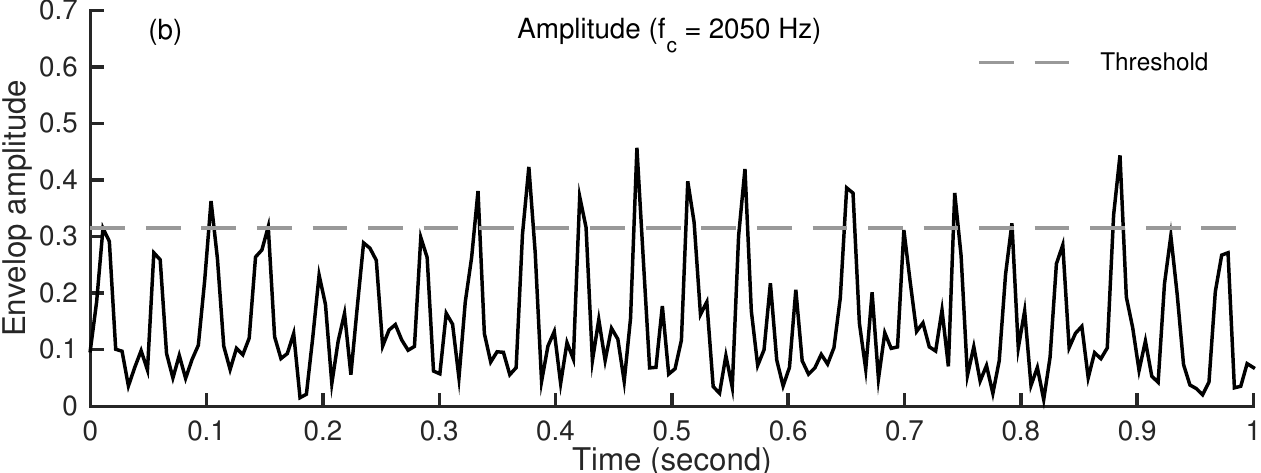}
	\caption{ Example 3: Results of fast spectral kurtosis.
	(a) Kurtogram. (b) Envelope of extracted transients.}
	\label{fig:example_2_kurt}
\end{figure}

\begin{table}[htbp]
  \centering
    \caption{Fault frequencies for MFS motor bearing } \medskip
  \begin{tabular}{@{} lcccr @{}} 
    \toprule
    Component & FTF & BPFO & BPFI & BSF \\
    \midrule
    Motor Bearings & 0.384 & 3.066 & 4.932 & 2.03 \\
    \bottomrule
  \end{tabular}
  \label{tab:fault_frequency}
\end{table}

To further demonstrate the effectiveness of the proposed method for machinery fault detection,
applications of motor bearing fault diagnosis are studied in this section.
The experiment is performed on a Spectra Quest's machinery fault simulator (MFS) illustrated in \figref{fig:Experimental_setup}.
The user does not need to make any modifications to the motor provided.
The simulation setup consists of a motor with intentionally faulted bearings:
one bearing with an inner race fault, and one bearing with an outer race fault.
Therefore, the fault diagnosis of the motor bearings is equivalent to a compound faults detection case.
The motor bearing fault frequencies for MFS components are given in Table~\ref{tab:fault_frequency}.
Accelerometers were mounted on the motor housing. The vibration signals were measured at a sampling rate $ f_s =  6400 $ Hz.
The rotating speed of the motor is 1433 rpm (23.89 Hz).
Hence, the characteristic frequencies of the inner race and outer race of the motor bearings are calculated to be $f_i \approx 117.8$ Hz and $f_o \approx 73.2$ Hz, respectively.

An observed vibration signal with a duration of 1 second is illustrated in \figref{fig:example_2_y}(a). However, the useful periodic pulses are buried in strong background noise and irrelevant interference.
The frequency spectrum and the Hilbert envelope spectrum of the signal are shown in \figref{fig:example_2_y}(b) and (c), respectively.
It can be observed from \figref{fig:example_2_y}(b) that the energy of the spectrum is distributed along the whole frequency range. Peaks at 24 Hz and its harmonics can be observed in the low frequency band, which correspond to the rotating frequency and its harmonics. The useful characteristic frequencies used to monitor the health status of motor bearings can not be observed in \figref{fig:example_2_y}(c).

The proposed POGS approach is utilized to process the vibration signal.
The results are shown in \figref{fig:example_2_x1} and \figref{fig:example_2_x2}.
We run POGS twice with two group structures,
determined by the inner race period $T_{i}$ and the outer race period $T_{o}$,
for the purpose of separating the useful impulsive fault features.
In particular, keeping $ N_1 = 2 $ and $ M = 4 $,
we define two different group structures using \eqref{eqn:fd_preoid_K} and  \eqref{eqn:fd_preoid_b}
based on the inner race period $T_{i} = 1/f_i$ and the outer race period $T_{o} = 1/f_o$, respectively.
We use the atan penalty function with $ a = 1/ (K_1\lam)$ to ensure convexity of the objective function.
The value of $\lam$ obtained using healthy data, is set so as to diminish the healthy data to almost all zeros.

The two periodic-related values correspond to the outer race defect and the inner race defect frequencies respectively.
Strong periodic impulses with intervals of approximately 0.0133 second (75 Hz) are clearly revealed \figref{fig:example_2_x1}(a),
which is exactly in accordance with the outer race characteristic frequency of 73.2 Hz.
Similarly, periodic transient features with the period 0.0085 second (118 Hz) can be observed in \figref{fig:example_2_x2}(a),
which is approximately the inner race characteristic frequency of 117.8 Hz.
To further reveal the characteristic frequencies, the frequency spectrum and the Hilbert envelope spectrum of the processed signals are shown in \figref{fig:example_2_x1}(b) and (c) and \figref{fig:example_2_x2}(b) and (c).
We also present the smoothed profiles of the Hilbert envelop spectrum to indicate the characteristic frequencies more clearly.
The Hilbert spectrum of the processed result illustrated in \figref{fig:example_2_x1}(c) is obtained using the proposed approach with prior knowledge of the outer race characteristic frequency 73.2 Hz.
Apparently, the characteristic frequencies of outer race 73 Hz and its harmonic components are clearly revealed, as shown in \figref{fig:example_2_x1}(c). 
Similarly, \figref{fig:example_2_x2}(c) is obtained utilizing the proposed approach with prior knowledge of the outer race characteristic frequency 117.8 Hz. 
The characteristic frequencies of inner race 117 Hz and its harmonic components can be clearly observed in \figref{fig:example_2_x2}(c). Thus, the proposed periodic non-convex regularized OGS approach successfully detects the compound faults of the motor bearings. 
More specifically, the fault features of outer race defect and inner race defect are clearly separated utilizing the proposed approach.

To further demonstrate the effectiveness of the proposed approach,
we also processed the vibration signal using spectral kurtosis and the results are presented in \figref{fig:example_2_kurt}.
\figref{fig:example_2_kurt}(a) is the fast kurtogram, where the optimal carrier is detected at $2050$ Hz.
Under this frequency,
periodic transients can be observed in the envelope of the filtered signal, in accordance with the rotating speed of motor (23.89 Hz), as shown in Fig.\ref{fig:example_2_kurt}(b).
No further fault-related information can be observed in \figref{fig:example_2_kurt}(b).

\section{Conclusion}
This paper proposes a periodic group sparsity approach for the purpose of detecting faults in rotating machinery.
The approach uses non-convex penalty functions to promote periodic group sparsity. We show how to constrain the non-convex penalty functions to ensure that the objective function is convex.
The OGS method was introduced in \cite{Chen_Selesnick_2014_OGS} and extended to non-convex regularized OGS in \cite{Chen_Selesnick_2014_GSSD}.
A novelty of the proposed approach is that the proposed penalty function models the periodicity of the sparse groups, making it suitable specifically for feature extraction in machinery fault diagnosis.
The period of the sparse pulses is chosen based on prior knowledge of machine geometry under inspect.
Moreover, the proposed approach is able to separate compound fault features by utilizing different periods of the periodic pulses corresponding to different fault frequencies (e.g., outer race and inner race characteristic frequencies of rolling element bearings) as demonstrated in Section~5.
The effectiveness of the proposed approach is verified by simulation and experimental data.
The processed results demonstrate that the proposed approach outperforms other methods.
%

%

\section*{Appendices}
\appendix
\renewcommand*{\thesection}{\Alph{section}}

\section{Proof of Proposition~\ref{prop:prop_1}}\label{app:A}
\noindent
\textit{Proof.~}
This proposition can be proven by taking the second-order derivative, where $\phi(v;a)$ is twice differentiable when $ v \neq 0$.
The second-order derivative of \eqref{eqn:fd_pv} is
\begin{align}
	p''(v) = \gamma + \lam \phi''(v;a), \quad v\neq 0.
\end{align}
Therefore, when $ v \neq 0 $, it is sufficient that, if $ \gamma + \lam \phi''(v;a) > 0 $, then $ p''(v) > 0$.
When $v=0$, Lemma~A in \cite{Chen_Selesnick_2014_GSSD} can be utilized directly, to show that,
since $ p'( 0^{-}) <  p'( 0^{+}) $, under the condition \eqref{eqn:fd_phi_condition}, function $p$ in \eqref{eqn:fd_pv} is strictly convex on $\real$.

\section{Proof of Proposition~\ref{prop:prop_2}}\label{app:B}
\noindent
\textit{Proof.~}
We rewrite $\sum_{ k} b_k u_{k}^2 $ with the respect to \eqref{eqn:fd_Ksets} as
\begin{align}
	\sum_{ k \in \Kset} b_k u_{k}^2 = \sum_{ k \in \Kset_1 } b_k u_k^2  + \sum_{ k \in \Kset_0 } b_k  u_k^2.
\end{align}
Then \eqref{eqn:fd_ogs_Pu} is given by
\begin{subequations}
\begin{align}
	P(u) 	&  = \frac{1}{2 K_1} \Big[ \sum_{ k \in \Kset_1 } b_k u_k^2  + \underbrace{ \sum_{ k \in \Kset_0 } b_k  u_k^2 }_{ 0 } \Big]
				+ \lam \phi \bigg( \Big[ \sum_{ k \in \Kset_1 } b_k  u_k^2  + \underbrace{ \sum_{ k \in \Kset_0 } b_k  u_k^2  }_{ 0 } \Big] ^{1/2} ;a \bigg) \\
			& = \frac{1}{2 K_1} \sum_{ k \in \Kset_1 } u_k^2 + \lam \phi\bigg( \Big[\sum_{ k \in \Kset_1 }u_{k}^2 \Big]^{1/2} ;a \bigg).
\end{align}
\end{subequations}
%
Using Proposition~\ref{prop:prop_1} with $v= [ \sum_{ k \in \Kset_1} u_{k}^2 ]^{1/2}$ and $\gamma = 1/K_1$,
it follows that, if \eqref{eqn:fd_ogs_prop_2} is satisfied, then $P$ is strictly convex.


\end{document}